\documentclass[a4paper,11pt]{article}
\pdfoutput=1
\RequirePackage{ifluatex}
\usepackage{fullpage}
\usepackage[utf8]{inputenc}
\usepackage[T1]{fontenc}
\usepackage{graphicx}
\usepackage{amsmath,bm}
\usepackage{float}
\usepackage{amsfonts}
\usepackage{algorithm}      
\usepackage[noend]{algpseudocode}  
\usepackage{algorithmicx}
\usepackage{authblk}
\usepackage[hypertexnames=false,colorlinks=true,urlcolor=Blue,citecolor=Green,linkcolor=BrickRed]{hyperref}
\usepackage[dvipsnames,usenames]{xcolor}
\usepackage{amsthm}
\usepackage[noadjust]{cite}
\usepackage{todonotes}
\usepackage{thmtools,thm-restate}
\usepackage{lineno}
\usepackage{comment}
\usepackage{subfig}

\newcommand{\Oh}{\mathcal{O}}

\newcommand{\eps}{\epsilon}
\newcommand{\lam}{\lambda}

\newcommand{\BR}{\mathsf{BR}}
\newcommand{\BL}{\mathsf{BL}}
\newcommand{\TR}{\mathsf{TR}}
\newcommand{\TL}{\mathsf{TL}}
\newcommand{\last}{\mathsf{last}}
\newcommand{\Blast}{\mathsf{Blast}}
\newcommand{\Bfirst}{\mathsf{Bfirst}}
\newcommand{\Tlast}{\mathsf{Tlast}}
\newcommand{\Tfirst}{\mathsf{Tfirst}}

\newtheorem{theorem}{Theorem}
\newtheorem{lemma}[theorem]{Lemma}
\newtheorem{property}[theorem]{Property}

\title{Optimal Distance Labeling for Permutation Graphs}

\author[ ]{Paweł Gawrychowski\thanks{\texttt{\href{mailto:gawry@cs.uni.wroc.pl}{gawry@cs.uni.wroc.pl}}}}
\author[ ]{Wojciech Janczewski\thanks{\texttt{\href{mailto:wojciech.janczewski@cs.uni.wroc.pl}{wojciech.janczewski@cs.uni.wroc.pl}}}}
\affil[ ]{University of Wrocław}

\date{}

\begin{document}

\maketitle

\begin{abstract}
A permutation graph is the intersection graph of a set of segments between two parallel lines. In other words, they are defined by
a permutation $\pi$ on $n$ elements, such that $u$ and $v$ are adjacent if an only if $u<v$ but $\pi(u)>\pi(v)$. We consider
the problem of computing the distances in such a graph in the setting of informative labeling schemes.

The goal of such a scheme is to assign a short bitstring $\ell(u)$ to every vertex $u$, such that the distance between $u$ and $v$
can be computed using only $\ell(u)$ and $\ell(v)$, and no further knowledge about the whole graph (other than that it is a permutation
graph). This elegantly captures the intuition that we would like our data structure to be distributed, and often leads to interesting
combinatorial challenges while trying to obtain lower and upper bounds that match up to the lower-order terms.

For distance labeling of permutation graphs on $n$ vertices,  Katz, Katz, and Peleg [STACS 2000] showed how to construct
labels consisting of $\Oh(\log^{2} n)$ bits. Later, Bazzaro and Gavoille [Discret. Math. 309(11)] obtained an asymptotically optimal
bounds by showing how to construct labels consisting of $9\log{n}+\Oh(1)$ bits, and proving that $3\log{n}-\Oh(\log{\log{n}})$ bits
are necessary. This however leaves a quite large gap between the known lower and upper bounds.
We close this gap by showing how to construct labels consisting of $3\log{n}+\Oh(\log\log n)$ bits.
\end{abstract}

\thispagestyle{empty}
\clearpage
\setcounter{page}{1}

\section{Introduction}
Geometric intersection graph is a graph where each vertex corresponds to an object in the plane, and two such vertices
are adjacent when their corresponding objects have non-empty intersection. Usually, one puts some restriction on the objects,
for example they should be unit disks. The motivation for such a setup is twofold. First, it allows for modelling many
practical problems. Second, it leads to nice combinatorial questions. This is a large research area, and multiple books/survey
are available~\cite{doi:10.1137/1.9780898719802,pal2013intersection,ellis2022intersection,hlinveny2001representing}
(to name just a few).

In this paper, we are interested in one of the most basic classes of geometric intersection graphs, namely permutation
graphs. A permutation graph is the intersection graph of a set of segments between two parallel lines. An alternative
(and more formal) definition is as follows. A graph $G=(V,E)$, where $V=\{1,2,\ldots,n\}$, is a permutation graph if there
exists a permutation $\pi$ on $n$ elements, such that $u$ and $v$ are adjacent exactly when $u<v$ but $\pi(u)>\pi(v)$.
See Figure~\ref{Fig:PerGraph} for a small example.

\begin{figure}[h]
\begin{center}
  \includegraphics[scale=1.7]{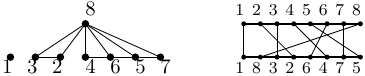}
\end{center}
\caption{Permutation graph described by $\pi= 1 8 3 2 6 4 7 5$.}
\label{Fig:PerGraph}
\end{figure}

Permutation graphs admit a few alternative definitions. For example, $G$ is a permutation graph if and only if
both $G$ and its complement are comparability graphs~\cite{DBLP:journals/jacm/EvenPL72}. Alternatively, they can
be defined as comparability graphs of two-dimensional posets~\cite{DBLP:journals/networks/BakerFR72}.
From the algorithmic point of view, the motivation for studying such graphs is that they can be recognised in linear
time~\cite{McConnellS99}, and multiple problems that are computationally difficult on general graphs admit
efficient algorithms on permutation graphs~\cite{Moehring1985,CHAO2000159,Colbourn}.
In this paper, we consider constructing a distributed data structure capable of efficiently reporting the distance between
two given vertices of a permutation graph.

\paragraph{Informative labeling schemes. } We work in the mathematically elegant model of informative
labeling schemes, formally introduced by Peleg~\cite{Peleg05}. Such a scheme is meant to represent graphs in an extremely distributed way.
Instead of storing a single global data structure,
a scheme assigns to each vertex $v$ of a given graph a binary string $\ell(v)$, called a label.
Later, given the labels of two vertices (and no additional information about the graph), we should be able to compute
some fixed function on those two vertices.

In the context of informative labeling schemes, the first function that one usually considers is adjacency,
where we simply want to decide whether the two vertices in question are neighbours in the graph.
As observed by Kannan, Naor, and Rudich~\cite{Kannan}, this is equivalent to finding a so-called vertex-induced
universal graph, and predates the more general notion of informative labeling schemes.
Non-trivial adjacency labeling schemes have been constructed for many classes of graphs,
for example undirected, directed, and bipartite graphs~\cite{alstrup2015adjacency},
graphs of bounded degree~\cite{EsperetLO08}, trees~\cite{alstrup2015optimal}, planar graphs~\cite{Planar1,PlanarMain},
comparability graphs~\cite{BonamyEGS21}, or general families of hereditary graphs~\cite{HatamiH22,Monotone}.
In every case, the length of each \emph{individual} label is much smaller than the size of a centralised structure,
often by a factor close to $\Theta(n)$, i.e., we are able to evenly distribute the whole adjacency information.
Other functions considered in the context of labeling schemes
are  ancestry in trees~\cite{FraigniaudAncestry,GawrychowskiKLP18},
routing~\cite{thorup2001compact,FraigniaudG01,GawrychowskiJL21} or connectivity~\cite{Korman10}.
However, from the point of view of possible applications, the next most natural question is that of distance labelings,
where given labels of two vertices we need to output the exact distance between them in a graph.
This properly generalises adjacency and usually needs much longer labels.

\paragraph{Distance labelings.}
The size of a labeling scheme is defined by the maximum length of any label assigned by the encoder.
If not stated otherwise, all graphs are unweighted and undirected, and consist of $n$ vertices.
For general undirected graphs, Alstrup, Gavoille, Halvorsen, and Petersen~\cite{alstrup2016simpler} constructed distance
labeling of size $(\log{3})n/2 + o(n)$, while the known lower bound is $\lceil n/2 \rceil$ bits.
Alstrup, Dahlgaard, Knudsen, and Porat~\cite{AlstrupDKP16} describe a slightly sublinear $o(n)$-bits labeling for sparse graphs. 
In case of planar graphs, scheme of size $\Oh(\sqrt{n})$ bits is presented by Gawrychowski and Uznanski~\cite{GawrychowskiU16},
and the known lower bound is $\Omega(n^{1/3})$ bits.
Shur and Rubinchik~\cite{ShurR24} designed a scheme using $n^{1.5}/\sqrt{6}+\Oh(n)$ distinct labels for families of cycles,
against a lower bound of $\Omega(n^{4/3})$~\cite{DBLP:journals/algorithmica/KormanPR10}.
For trees, we do not need a polynomial number of bits, as they can be labeled for distances using only
$1/4 \log^2{n}+o(\log^2{n})$ bits as shown by Freedman, Gawrychowski, Nicholson, and Weimann~\cite{FGNW16},
which is optimal up to the second-order terms~\cite{alstrup2015distance}. Of course, the interesting
question is to find natural classes of graphs that admit small distance labeling schemes.

\paragraph{Distance labeling for permutation graphs.}
Katz, Katz and Peleg~\cite{KatzKP05} presented distance labeling scheme of size $\Oh(\log^2{n})$
for interval and permutation graphs.
This was improved by Gavoille and Paul to $5\log{n}$ labeling for interval graphs~\cite{GavInterval},
with a lower bound of $3\log{n}-\Oh(\log\log n)$.
Very recently, He and Wu~\cite{HeW24} presented tight $3\log{n}+\Oh(\log{\log{n}})$ distance labeling for interval graphs.
For connected permutation graphs, Bazzaro and Gavoille in~\cite{BazzaroG05} showed a distance labeling scheme of size $9\log{n}+\Oh(1)$ bits,
and a lower bound of $3\log{n}-\Oh(\log\log n)$.
As noted in their work, this is especially interesting as there are very few hereditary graph classes that admit
distance labeling schemes of size $o(\log^2{n})$.
As our main result, we close the gap between the lower and upper bounds on the size of distance labeling for
permutation graph, by showing the following theorem.

\begin{theorem}
There is a distance labeling scheme for permutation graphs with $n$ vertices using labels of size $3\log{n}+\Oh(\log{\log{n}})$ bits.
The distance decoder has constant time complexity, and labels can be constructed in polynomial time.
\label{Th:Main}
\end{theorem}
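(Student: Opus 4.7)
I would begin with the standard geometric picture, identifying each vertex $u$ with the point $(u,\pi(u))$; then $u\sim v$ exactly when one of the two points is strictly upper-left of the other, and the open neighborhood of $u$ decomposes into the quadrants $\TL(u)$ and $\BR(u)$. The first goal is to give a structural description of the BFS ball of radius $k$ around $u$: it forms a staircase-shaped region whose boundary is determined by a small number of extremal neighbor pointers such as $\Blast(u),\Bfirst(u),\Tlast(u),\Tfirst(u)$, the farthest-reaching $\BR$- and $\TL$-neighbors in the two natural senses, and the next layer is obtained by applying the same extremal pointers to the current frontier. As a sanity check, I would derive the distance-$2$ formula: when $u<v$ and $\pi(u)<\pi(v)$ (so $v\in\TR(u)$), $d(u,v)=2$ iff $u$ has a $\TL$-neighbor reaching above $v$, or equivalently $v$ has a $\BR$-neighbor reaching below $u$. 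Inducting on $k$ should give the general distance formula in terms of iterated extremal reaches.

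\textbf{Label design.} Storing the four extremal pointers together with $(u,\pi(u))$ would already use $\approx 6\log n$ bits and miss the bound by a factor of two. The aim is to compress all of this to the coordinates $(u,\pi(u))$ plus one carefully chosen $\log n$-bit summary of the reach data, with $O(\log\log n)$ extra bits for small bucket/type information. A natural route is to exploit redundancy among the extremal pointers across the graph: the $\Blast$-pointers, for example, partition the vertex set into chains whose breakpoints are shared by many vertices, and a symmetric statement holds for $\Tlast$. This suggests encoding each vertex's reach data by its position in a globally-fixed forest or partition whose total size is linear, so that each individual label contributes only $\log n + O(\log\log n)$ bits.

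\textbf{Decoder and main obstacle.} The decoder, given $\ell(u)$ and $\ell(v)$, first reads the coordinates to identify the relative quadrant (resolving distance $\le 1$ immediately); otherwise it combines the reach summaries from both labels and, via a constant number of arithmetic comparisons, applies the iterated-reach recursion to output $d(u,v)$ in $O(1)$ time. The main obstacle is unquestionably the compression step: showing that $\log n + O(\log\log n)$ bits of reach data per vertex are \emph{locally sufficient}, i.e., that the combined summaries of just two labels already trigger the correct BFS recursion with no further global information. This is where the specific two-dimensional structure of permutation graphs must be exploited more carefully than in the Bazzaro-Gavoille $9\log n$ construction, and will likely require a chain-decomposition of the point set coordinated with the BFS reach trees, with the $O(\log\log n)$ slack absorbing small per-vertex bookkeeping and matching the $3\log n-\Oh(\log\log n)$ lower bound up to the second-order terms.
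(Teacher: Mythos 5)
Your high-level setup is in the right spirit and overlaps with the paper's: the geometric picture, the four extremal neighbour pointers $\Bfirst,\Blast,\Tfirst,\Tlast$, and the observation that distances reduce to iterated reaches along those pointers. You also correctly identify the crux: compressing roughly $6\log n$ bits of reach data down to $3\log n$. But your proposal stops exactly at that crux and substitutes a hope (``exploit redundancy among the extremal pointers across the graph'' via a ``globally-fixed forest or partition'' or ``chain decomposition coordinated with BFS reach trees'') for an actual mechanism. As written, there is no argument that two labels alone, each of $\log n + \Oh(\log\log n)$ bits of reach summary beyond the coordinates, suffice to drive the recursion; nothing pins down what the ``summary'' is, and no redundancy is actually exhibited. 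This is a genuine gap, not a detail.

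What the paper actually does is quite different from (and more concrete than) your chain-decomposition idea, and it is worth seeing why. First, it does \emph{not} store the coordinates $(u,\pi(u))$ at all: after adding $\Oh(1)$ auxiliary points per input vertex, adjacency (distance $1$) can be tested purely from the boundary-neighbour ranges, so the $2\log n$ you budget for coordinates is recovered. Second, the reach structure is captured by partitioning both boundaries jointly into BFS layers from a single point and assigning a Gavoille--Paul-style total order $\lambda$ on boundary points so that $d(u,v)=|L(u)-L(v)|$ iff the $\lambda$-order is opposite to the layer order; a vertex touches at most three consecutive layers, so all four layer numbers fit in $\log n+\Oh(1)$ bits. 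Third---and this is the step your proposal lacks---the four $\lambda$ values are collapsed to two by a \emph{directional redundancy} argument: $\lambda(\Blast(v))$ is only ever needed to certify quick paths to the right, and $\lambda(\Tfirst(v))$ only to the left, and one shows there is always an intermediate value $v_{y'}$ with $\lambda(\Blast(v))\le v_{y'}\le\lambda(\Tfirst(v))$ that answers both sides correctly (with a special ``infinity'' bit when $\Blast(v)$ is last in its layer); symmetrically for the pair $(\lambda(\Tlast(v)),\lambda(\Bfirst(v)))$. Proving this requires nontrivial casework (Properties on distances $\geq 3$, $=1$, and $=2$ in Section~5 of the paper) showing that the perturbed $\lambda$ values never create false quick paths or false range intersections. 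None of this is implied by, or a reformulation of, a ``global forest/chain partition'' encoding, and I do not see how your route would get below $5\log n$ without discovering the same directional-redundancy observation. You would need to either reproduce that observation or supply an alternative compression argument in comparable detail.
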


\paragraph{On constants.}
We stress that in the area of informative labeling scheme, it is often relatively easy to obtain asymptotically optimal
bounds on the size of a scheme, and the real challenge is to determine the exact constant for the higher-order term.
This has been successfully done for multiple classes, e.g. distance labeling for trees, where $\Oh(\log^2{n})$~\cite{PelegTD} was
first improved to  $(1/2)\log^2{n}$~\cite{alstrup2015distance} and then $(1/4)(\log^2{n})+o(\log^2{n})$~\cite{FGNW16},
optimal up to second-order term.
Adjacency labeling for trees is a particularly good example, with the first scheme having a size of $6\log{n}$
based on~\cite{Muller}, then $4\log{n}$~\cite{Kannan}, $(2+o(1))\log{n}$~\cite{GavoilleL07},
$(4/3+o(1))\log{n}$~\cite{Planar1}, finally $\log{n}+\Oh(\sqrt{\log{n}\log{\log{n}}})$~\cite{PlanarMain}
and $\log{n}+\Oh(\sqrt{\log{n}})$~\cite{GawrychowskiJ22} were presented, the last two being optimal up to the second order terms.
For adjacency in bounded-degree graphs with odd $\Delta$, initial $(\Delta/2+1/2)\log{n}+\Oh(1)$~\cite{Butler2009} was improved
to $(\Delta/2+1/2-1/\Delta)\log{n}+\Oh(\log{\log{n}})$~\cite{EsperetLO08} and then to optimal $(\Delta/2)\log{n}+\Oh(1)$~\cite{AlonN17}.
In the case of adjacency labelings for general undirected graphs, starting with the classical result presenting labels of size
$n/2+\Oh(\log{n})$~\cite{moon_1965}, $n/2+\Oh(1)$~\cite{alstrup2015adjacency} and $n/2+1$~\cite{Alon17}
labelings were constructed.
Similar sharply optimal labelings are shown for directed graphs, tournaments, bipartite graphs, and oriented graphs.
Finally, the first described ancestry labeling schemes for trees was of size $2\log{n}$~\cite{Kannan}, and then
$(3/2)\log{n}$~\cite{AbiteboulKM01}, $\log{n}+\Oh(\log{n}/\log{\log{n}})$~\cite{thorup2001compact}, $\log{n}+\Oh(\sqrt{\log{n}})$~\cite{AbiteboulAKMR06}, $\log{n}+4\log{\log{n}}+\Oh(1)$~\cite{FraigniaudAncestry}, $\log{n}+2\log{\log{n}}+\Oh(1)$~\cite{DahlgaardKR15} schemes were provided, achieving optimality up to second-order terms.

\paragraph{Related works.}
The challenge of designing labeling schemes with short labels is related to that of designing succinct data structures,
where we want to store the whole information about the input (say, a graph) using very few bits, ideally at most the information
theoretical minimum. This is a rather large research area, and we only briefly describe the recent results on succinct data structures
for the interval and permutation graphs.
Tsakalidis, Wild, and Zamaraev~\cite{TsakalidisWZ23} described a structure using only $n\log{n}+o(n\log{n})$ bits (which is optimal)
capable of answering many types of queries for permutation graphs.
They also introduce the concept of semi-distributed representation, showing that for distances in permutation graphs
it is possible to store global array of size $\Oh(n)$ bits and labels on only $2\log{n}$ bits,
offering a mixed approach which can overcome $3\log{n}$ lower bound for distance labeling.
For interval graphs, a structure using $n\log{n}+\Oh(n)$ bits (which is again optimal) is known (\cite{AcanCJS21} and~\cite{HMNWW20}).

\section{Overview and organisation}

In Section~\ref{Sec:Prelim}, we present basic definitions for labeling schemes and our approach to permutation graphs.
Then, in  Section~\ref{Sec:5log}, we build on the methods of Gavoille and Paul~\cite{GavInterval}, as well as
Bazzaro and Gavoille~\cite{BazzaroG05} for creating distance labelings of interval and permutation graphs.
We can represent a permutation graph as a set of points in the plane, where two points (two vertices) are adjacent when
one is above and to the left of the other.

The first thing we need to notice when considering distances is the presence of two \emph{boundaries} in such representation.
We say that the top boundary is formed by points with empty (containing no other points) top-left (north-west) quadrant,
and bottom boundary by points with empty bottom-right (SE) quadrant.
Points on the boundaries are especially important -- it can be seen that for any pair of points, there is a shortest path between them with
all internal points of the path being on boundaries.
As a set of boundary points forms a bipartite graph, such shortest path strictly alternates between boundaries.

We can also observe that for a point $v$ not a boundary, there are four boundary points of special interest for it, see Figure~\ref{Fig:Over1}.
These are pairs of extreme points on both boundaries from the points adjacent to $v$.
Any shortest path from $v$ to $u$ with $d(v,u)>2$ can have as a second point one of these special points for $v$, and as a penultimate point one of the special points for $u$.
We need to handle distances of 1 and 2 separately, but otherwise, this means it is enough to be able to compute distances between boundary points.

\begin{figure}[h]
\begin{center}
  \includegraphics[scale=1]{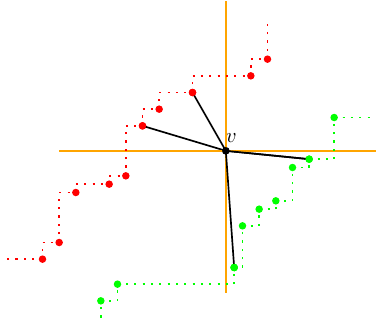}
\end{center}
\caption{Green points form bottom boundary, red points top boundary.
$v$ is not on the boundary, orange lines show its quadrants.
For $v$, there are four points of special interest, extreme neighbours on both boundaries.}
\label{Fig:Over1}
\end{figure}

If we can build distance labeling for boundary points, and store labels of special points efficiently, we can obtain good distance labelings for permutation graphs.
This is possible as boundaries are highly structured, in particular ordered.
In~\cite{BazzaroG05} authors view two boundaries as two proper interval graphs and deal with them using methods from~\cite{GavInterval}.
An interval graph is proper when no interval is completely contained by another one.
Gavoille and Paul first partition vertices of a proper interval into distance \emph{layers}, by distances to the vertex
representing the leftmost interval.
Let us denote layer number of vertex $u$ by $L(u)$.
It can be seen that for any two vertices $u,v$ in interval graph we have either $d(u,v)=|L(u)-L(v)|$ or $d(u,v)=|L(u)-L(v)|+1$.
Then the following lemma is used~\cite{GavInterval}:

\begin{lemma}
There exists a total ordering $\lam$ of vertices of proper interval graph such that given $\lam(v), \lam(u)$
and layer numbers $L(u) < L(v)$ for two vertices $u,v$, we have $d(u,v)=L(v)-L(u)$
if and only if $\lam(u) > \lam(v)$.
\end{lemma}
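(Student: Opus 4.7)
The plan is to translate the conclusion into a property of a single linear order. Recall that a proper interval graph admits a sorted interval representation in which closed neighborhoods are consecutive ranges, BFS layers from vertex $1$ are themselves consecutive ranges of vertex indices, and $d(u,v)\in\{L(v)-L(u),\,L(v)-L(u)+1\}$ whenever $L(u)\le L(v)$. I will call a pair $(u,v)$ \emph{compact} when the first alternative holds---equivalently, when there is a path $u=w_0,w_1,\ldots,w_{L(v)-L(u)}=v$ with $L(w_i)=L(u)+i$. The lemma then asks for a total order $\lam$ realising, for every pair with $L(u)<L(v)$, the equivalence $\lam(u)>\lam(v) \Leftrightarrow (u,v)$ is compact.

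My candidate for $\lam$ is the order in which a depth-first search from vertex $1$ discovers the vertices, where at each visited vertex $v$ the DFS explores the still-unvisited neighbors in layer $L(v)+1$ in \emph{decreasing} index order before backtracking. The intuition is that this DFS always chases the longest available compact chain first, so a vertex $u$ that can reach deep into higher layers via a compact path is discovered early, while a ``dead end''---a vertex lacking a compact continuation---is discovered only after all richer branches rooted at a common ancestor have been exhausted.

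The main obstacle is a monotonicity lemma specific to proper interval graphs, which powers both directions of the equivalence: the set of layer-$k$ vertices from which there is a compact path to a fixed $v$ of higher layer is a contiguous \emph{suffix} of layer $k$ in the sorted order (and, dually, the set of higher-layer vertices compactly reachable from a given $u$ is a contiguous prefix of its layer). This is precisely where the proof uses ``proper'', through the consecutive-neighborhoods property and the fact that the rightmost vertex of each layer is adjacent to the leftmost vertex of the next. Granting this lemma, the DFS analysis splits cleanly: a compact pair $(u,v)$ forces $u$ to be visited during the DFS walk that eventually reaches $v$, while a non-compact pair places $u$ outside the subtree rooted at the DFS ancestor that captures $v$, so $u$ is visited only after that subtree is exhausted.

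As an independent consistency check, the prescribed binary relation is acyclic: a $3$-cycle with strictly nested layers $L(u_1)<L(u_2)<L(u_3)$ would force $d(u_1,u_3)=L(u_3)-L(u_1)+1$ against the triangle inequality, and the only other configuration $L(u_1)<L(u_3)<L(u_2)$ with $(u_1,u_2)$ compact but both $(u_1,u_3)$ and $(u_3,u_2)$ non-compact is exactly what the monotonicity lemma forbids; longer cycles reduce to these cases. Hence, in the worst case, I could abandon the explicit DFS construction and obtain $\lam$ as any linear extension of the resulting partial order via Szpilrajn's theorem.
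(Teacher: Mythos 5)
Your DFS-discovery-order construction cannot be right even at the level of a sanity check: vertex $1$ is the DFS root, so it is discovered first and gets $\lam(1)=1$, the minimum; but by definition of BFS layers $d(1,v)=L(v)=L(v)-L(1)$ for \emph{every} $v$, so every pair $(1,v)$ is compact and the lemma forces $\lam(1)>\lam(v)$ for all $v$, i.e.\ $\lam(1)$ must be the \emph{maximum}. Replacing discovery times by finishing times fixes the sign but not the order. Take the proper interval graph on intervals $v_0=[0,2]$, $v_1=[1,3]$, $v_2=[1.5,3.5]$, $v_3=[3.2,5]$, so $L(v_0)=0$, $L(v_1)=L(v_2)=1$, $L(v_3)=2$, and the unique non-compact pair with $L(u)<L(v)$ is $(v_1,v_3)$ (here $d(v_1,v_3)=2$ while $L(v_3)-L(v_1)=1$); any valid $\lam$ must therefore satisfy $\lam(v_1)<\lam(v_3)$. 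Your DFS with decreasing-index tie-breaking runs $v_0\to v_2\to v_3$, backtracks, and only then visits $v_1$, so $v_1$ is both discovered \emph{and} finished after $v_3$; neither time stamp gives $\lam(v_1)<\lam(v_3)$. The problem is structural: a DFS plunges to the deepest layer before clearing out the low-layer dead ends, whereas correctness requires that a vertex which cannot compactly continue (such as $v_1$) be ordered \emph{before} everything in higher layers it cannot reach.

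The paper's proof of the analogous statement (Lemma~\ref{Lem:Lambda}, for boundary points) uses exactly the opposite scheduling discipline: a greedy, Kahn-style assignment that repeatedly gives the next $\lam$ value to the lowest-index point in the \emph{lowest layer} all of whose next-layer neighbours have already received values. This prioritises low layers and thereby finishes the hopeless vertex $v_1$ immediately (it has no next-layer neighbours), before touching $v_3$ at all; the correctness invariant is then exactly the two bullets the paper proves at the end of that proof. Your ``monotonicity lemma'' (neighbourhoods in adjacent layers are nested prefixes/suffixes) is correct and is the same fact as the paper's inductive property~4, so that ingredient is sound. However, the Szpilrajn fallback is also incomplete as written: your reduction of long cycles to $3$-cycles tacitly assumes one can always shortcut along a chord $(u_i,u_{i+2})$, which fails when $u_i$ and $u_{i+2}$ lie in the same layer and are hence incomparable in your relation; handling those $4$-cycles (and longer ones with alternating equal layers) requires another invocation of monotonicity that you do not give. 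So the explicit construction is wrong, and the abstract existence argument has a real gap at the cycle-reduction step.
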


In other words, we can assign to each vertex $v$ just two numbers $L(v), \lam(v)$,
and then still be able to determine all exact distances.
Going back to permutation graphs, when we view two boundaries as proper interval graphs, it is possible to obtain
straightforward distance labeling for permutation graphs using $20\log{n}$ bits,
where the big constant is due to storing many distance labels for interval graphs completely independently.
Then authors are able to reduce the size of labels to $9\log{n}$ bits, after eliminating many redundancies in the stored sub-labels.

\begin{figure}[h]
\begin{center}
  \includegraphics[scale=1]{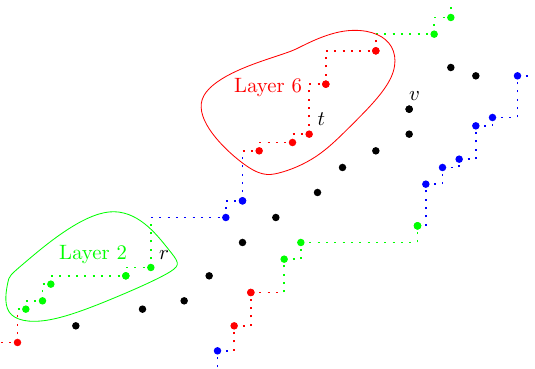}
\end{center}
\caption{Boundary points partitioned into layers. $r,t$ are on the top boundary, but in layers 2 and 6.
For any two points in layers $a$ and $b$, the distance between them is always either $|a-b|$ or $|a-b|+2$;
here, $d(r,t)=4$.
$v$ is not on the boundary, and any such point can be adjacent to points from at most three different layers.}
\label{Fig:Over2}
\end{figure}

In this paper, we show that working with both boundaries at once can yield better results.
To do this, we modify the methods of Bazzaro and Gavoille and then carefully remove even more redundancies.
First, we partition points on both boundaries into layers, defined by distances from some initial point,
see Figure~\ref{Fig:Over2}.
As we use distances from a single point to define layers, the distance between any two boundary points is a difference
of their layer numbers, or this value increased by two.
It can be shown that again some $\lam$ ordering can be used, and storing it takes around $\log{n}$ bits for each boundary point.

As a single point is adjacent to at most three layers, layer numbers of four special points are easy to store,
and we could achieve labeling of length $(2+1+4)\log{n}+\Oh(1)=7\log{n}+\Oh(1)$ by storing for each point respectively
its 2D coordinates, layer numbers of neigbours and four times $\lam$ values for extreme neighbours,
in order to compute distances between boundary points.
This can be reduced to $5\log{n}$ by dealing with distances 1 and 2 more carefully, allowing us to not store point coordinates explicitly.
All of the above is described in Section~\ref{Sec:5log}.

After additional analysis and reductions laid out in Section~\ref{Sec:3log}, we can decrease the size to $3\log{n}$.
This is since, roughly speaking, one can collapse information stored for two pairs of extreme
boundary neighbours into just two numbers, due to useful graph and layers properties.
More precisely, we can observe that we store excessive information about the set of four extreme neighbours.
For vertex $v$, two extreme right points on both boundaries are used to reach points to the right of $v$,
and extreme left are used to reach points to the left.
But we do not need the exact distance between points to the left of $v$ and right extreme points, thus we have some possibility
to adjust the stored $\lam$ values.
Particularly, the main case is when $\lam$ value of the right extreme point on the bottom boundary is smaller than 
$\lam$ value of the left extreme point on the top boundary;
it turns out that these two values can be equalised and stored as some single value in between the original values.
The second pair of extreme points can be dealt with in a similar manner, and then we need to ensure that all of this
did not interfere with the correctness of deciding about distances 1 and 2, which are different cases than all distances
larger than 2.

\section{Preliminaries}
\label{Sec:Prelim}

\paragraph{Permutation graphs.}
Permutation graph $G_{\pi}$ is a graph with vertices representing elements of permutation $\pi$,
where there is an edge between two vertices if and only if the elements they represent form an inversion in the permutation.
See Figure~\ref{Fig:PerGraph}.
In~\cite{McConnellS99} McConnell and Spinrad show that it is possible to test in linear time whether
a given graph is a permutation graph, and also construct the corresponding permutation.

We will use a geometric (or 'grid') representation of permutation graph $G_{\pi}$ on $n$ vertices as a set of points with coordinates in $[1,n]$,
with point $(i,\pi^{-1}(i))$ for each $i \in [1,n]$.
Considering a point $p$, we always denote its coordinates by $p=(p_x,p_y)$.
Top-left quadrant of point $p$, $\TL_p$, is a subset of points $\{v: v_x < p_x \land v_y > p_y\}$ from the graph.
Similarly, we have $\TR_p$ (top-right), $\BL_p$ (bottom-left) and $\BR_p$ (bottom-right) quadrants.
Two points are adjacent in the graph iff one is in $\TL$ or $\BR$ quadrant of the other.
See Figure~\ref{Fig:Grid}.
We have transitivity in a sense that if $w \in \BR_v$ and $u \in \BR_w$, then $u \in \BR_v$; similarly for other quadrants.
By distance $d(u,v)$ between two points we will mean distance in the graph.

\begin{figure}[h]
\begin{center}
  \includegraphics[scale=1.7]{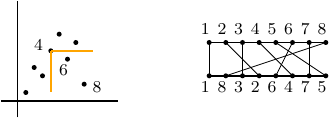}
\end{center}
\caption{Geometric representation of the graph from Figure~\ref{Fig:PerGraph}, so permutation [1,8,3,2,6,4,7,5].
Point $(4,6)$ is adjacent to $(6,5)$ and $(8,2)$, as these two points are in its bottom-right quadrant,
while top-left quadrant of $(4,6)$ is empty.}
\label{Fig:Grid}
\end{figure}

We will assume the given permutation graph is connected.
There are standard ways to enhance labelings to disconnected graphs by adding at most $\Oh(\log{\log{n}})$ bits to the labels,
and we will describe how it can be done after the main theorem.
We note that for connected graphs of size at least two, no point could be on both boundaries,
as it would be isolated otherwise.

\paragraph{Labeling schemes.}
Let $\mathcal{G}$ be a family of graphs.
A distance labeling scheme for $\mathcal{G}$ consists of an encoder and a decoder.
The encoder takes a graph $G\in \mathcal{G}$ and assigns
a label (binary string) $\ell(u)$ to every vertex $u\in G$.
The decoder receives labels $\ell(u)$ and $\ell(w)$,
such that $u,w\in G$ for some $G\in \mathcal{G}$ and $u \neq w$,
and should report the exact distance $d(u,w)$ between $u$ and $w$ in $G$.
The decoder is not aware of $G$ and only knows that $u$
and $w$ come from the same graph belonging to $\mathcal{G}$.
We are interested in minimizing the maximum length of a label, that is, $\max_{G\in \mathcal{G}}\max_{u\in G} |\ell(u)|$.

\paragraph{Organization of the labels.}
The final labels will consist of a constant number of parts.
We can store at the beginning of each label a constant number of pointers to the beginning of each of those parts.
As the total length of a label will be $\Oh(\log{n})$, pointers add only $\Oh(\log{\log{n}})$ bits to the labels.

\section{Scheme of Size 5$\log{n}$}
\label{Sec:5log}
In this section, we describe how to use boundaries to design distance labeling of size $7\log{n}+\Oh(1)$,
and then how to refine it to reach $5\log{n}+\Oh(1)$.

\subsection{Properties of Boundaries}

\begin{figure}[h]
\begin{center}
  \includegraphics[scale=0.7]{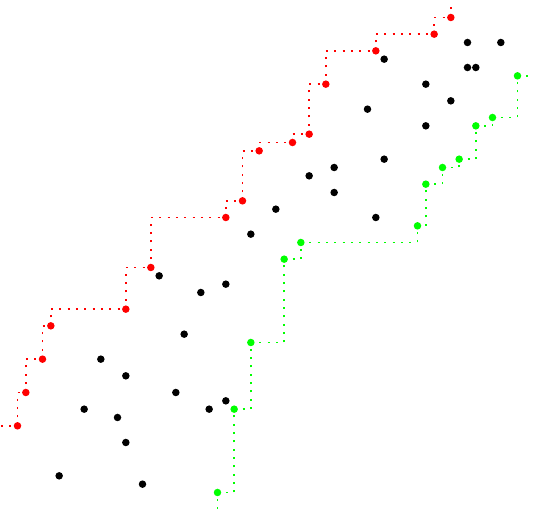}
\end{center}
\caption{Green points are on the bottom boundary, red points are on the top boundary.}
\label{Fig:Boundaries}
\end{figure}

For a set of points $S$, we have its top boundary defined as a subset of points from $S$ which top-left quadrants are empty,
and bottom boundary as a subset of points which bottom-right quadrants are empty.
See Figure~\ref{Fig:Boundaries}.
Observe that points on boundaries are ordered, that is, for $u$ and $v$ on the same boundary, either $u_x>v_x$ and $u_y>v_y$,
or $u_x<v_x$ and $u_y<v_y$.
We use $<$ to note this relation on boundary points.

\begin{figure}[h]
\begin{center}
  \includegraphics[scale=0.7]{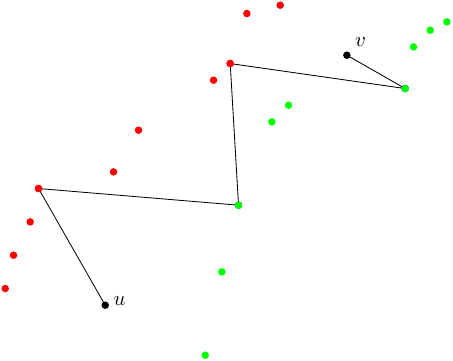}
\end{center}
\caption{Path between two points alternating between top and bottom boundaries.}
\label{Fig:AltPaths}
\end{figure}

Boundaries are particularly useful when considering distances between points:
\begin{property}
For any two points $u,v$ at distance $d$, there is a path $P=(u=q_0,q_1,q_2,\ldots,q_d=v)$ of length $d$
such that all points except possibly $u,v$ are on alternating boundaries.
\label{Prop:Bounds}
\end{property}
\begin{proof}
Take any shortest path $P'$ and any adjacent $q_i$ and $q_{i+1}$ on $P'$, assume without loss of generality that $q_{i+1} \in \TL_{q_i}$.
Suppose that $q_{i+1}$ is not on the top boundary.
We either have $q_{i+2} \in \TL_{q_i+1}$ or $q_{i+2} \in \BR_{q_i+1}$.
Note that by transitivity if $q_{i+2} \in \TL_{q_{i+1}}$, then $q_{i+2} \in \TL_{q_i}$ and we could have a shorter path by removing $q_{i+1}$.
Thus, assume $q_{i+2} \in \BR_{q_{i+1}}$.
If $q_{i+1}$ is not on the top boundary, then by definition there exists a boundary point $q' \in \TL_{q_{i+1}}$,
and it must be that $q_{i+2} \in \BR_{q'}$.
This means that we could replace $q_{i+1}$ by $q'$, increasing the number of points from the path lying on the boundary, and then repeat the argument.
Therefore, we have that all points except the first and last ones can always lie on boundaries.
See Figure~\ref{Fig:AltPaths} for an illustration.
These must be alternating boundaries, as by definition no two points on the same boundary are adjacent.
\end{proof}

\begin{figure}[h]
\begin{center}
  \includegraphics[scale=0.8]{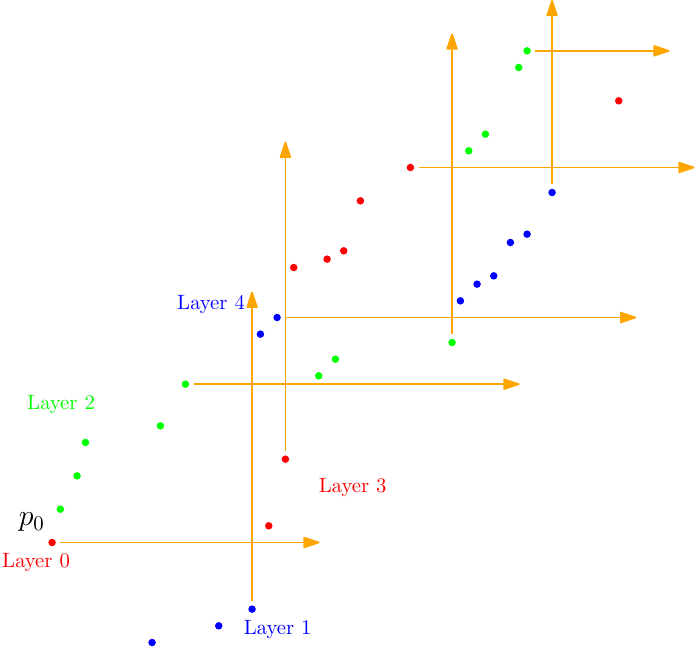}
\end{center}
\caption{Layers on boundary points defined as distances from the leftmost point $p_0$.
Orange lines represent which points are adjacent to (being in $\BR$ or $\TL$) the last point in the layer.}
\label{Fig:Layers}
\end{figure}

We partition all points on the boundaries into layers, in the following way.
The layer number $0$ consists of a single left-most point $p_0$ in the whole set $S$.
Note that $p_0$ is on the top boundary.
Then, a boundary point is in a layer number $i$ if its distance to $p_0$ is $i$.
By $L(u)$ we denote the layer number of $u$.
See Figure~\ref{Fig:Layers}, we will soon see that indeed layers are always nicely structured, as pictured.
Observe that in even layers, we have only points from the top boundary, and in odd layers only from the bottom boundary,
as points on a single boundary are non-adjacent.
Thus, points in a single layer are ordered, by both coordinates.

To determine the distance between boundary points, we use a method similar to the one from the paper of
Gavoille and Paul~\cite{GavInterval}, precisely Theorem 3.8.
This is also connected to what Bazzaro and Gavoille~\cite{BazzaroG05} do in their work, but not identical,
as they use bottom and top boundaries separately, as two mostly independent interval graphs.

\begin{lemma}
There exists a total ordering $\lam$ of boundary points such that given $\lam(v), \lam(u)$
and layer numbers $L(u) < L(v)$ for two boundary points $u,v$, we have $d(u,v)=L(v)-L(u)$
if and only if $\lam(u) > \lam(v)$.
\label{Lem:Lambda}
\end{lemma}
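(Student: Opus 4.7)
The plan is to define $\lam$ inductively using the BFS layering from $p_0$ and to verify the if-and-only-if by induction on $k=L(v)-L(u)$, closely paralleling Gavoille--Paul's argument for proper interval graphs but adapted to handle both boundaries simultaneously. Before defining $\lam$, I would establish three structural facts: (i) each boundary is a chain under the coordinate partial order, so its points are totally ordered with both coordinates increasing together; (ii) each layer lies entirely on one boundary and occupies a contiguous segment of that boundary, with successive same-parity layers arranged in order along the boundary (as depicted in Figure~\ref{Fig:Layers}); and (iii) the bipartite adjacency between consecutive layers is a monotone staircase---listing layer $i$ as $u_1<u_2<\cdots$ and layer $i+1$ as $w_1<w_2<\cdots$ in boundary order, each $u_j$'s neighbourhood in layer $i+1$ is a contiguous interval and both endpoints of this interval are non-decreasing in $j$.

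The key technical step is a compatibility property that rules out the configurations which would otherwise force contradictory $\lam$-orderings. Concretely, for $u$ in layer $i$ and $v$ in layer $i+1$ on opposite boundaries, the non-adjacent case where $v$ lies ``down-and-to-the-left'' of $u$ (say $u$ on top with $u_x>v_x$ and $u_y>v_y$, and $v$ on bottom) cannot occur: when $i\geq 2$ the vertex $u$ has a neighbour $q$ in layer $i-1$ on the bottom boundary, and by fact~(ii) $q$ precedes $v$ on that boundary, so $q_x<v_x$; but $u\in\TL_q$ forces $u_x<q_x$, contradicting $u_x>v_x$. The case $i=0$ is trivial since $u=p_0$ has $u_x=1$, and a symmetric argument treats the mirror boundary assignment. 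Thus opposite-boundary pairs in adjacent layers can fail to be adjacent only in the ``upper-right'' configuration $u_x<v_x$, $u_y<v_y$.

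Given this, I would build $\lam$ by simulating BFS from $p_0$, processing layers in order and inserting each newly discovered layer-$i$ vertex into the $\lam$ sequence immediately after all the layer-$(i-1)$ vertices it is adjacent to; this makes $\lam$ interleave the two boundaries in accordance with the staircase. The iff then follows by induction on $k$. For the base $k=1$ it reduces to the compatibility property. For the inductive step I would observe that a monotone length-$k$ path from $u$ to $v$ exists iff some layer-$(L(u)+1)$ neighbour $w$ of $u$ satisfies $d(w,v)=k-1$; by the inductive hypothesis this means $\lam(w)>\lam(v)$, and combined with the staircase and the construction of $\lam$ this is equivalent to $\lam(u)>\lam(v)$. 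The main obstacle will be precisely the global consistency of $\lam$: a priori, distinct layer-pairs could impose contradictory orderings on two same-layer vertices, and the compatibility property above is exactly what rules out such bad interleavings and allows a single total order to satisfy the iff for every pair.
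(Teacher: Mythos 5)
Your scaffolding (layers alternate boundaries and are contiguous; the staircase adjacency between consecutive layers) is the same as the paper's, and your $\lam$ is morally the paper's greedy, but there are two genuine problems with the plan as written.

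First, the insertion rule has the direction reversed. You place a layer-$i$ vertex $w$ \emph{immediately after} all the layer-$(i-1)$ vertices it is adjacent to, which makes $\lam(w)$ \emph{larger} than $\lam(q)$ for every such neighbour $q$. But the lemma demands the opposite: with $L(q)<L(w)$ and $d(q,w)=1$, one needs $\lam(q)>\lam(w)$. Either you mean ``immediately before,'' or the sequence is implicitly read in decreasing $\lam$-order starting from $p_0$; this, together with the tie-break among layer-$i$ vertices that share the same layer-$(i-1)$ neighbourhood, has to be pinned down since the iff depends on it.

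Second, and more substantively, the inductive step buries the crux in ``combined with the staircase and the construction of $\lam$ this is equivalent to $\lam(u)>\lam(v)$.'' What is actually needed is: letting $w^*$ be the layer-$(L(u)+1)$ neighbour of $u$ with largest $\lam$, one must have $\lam(w^*)>\lam(v)\iff\lam(u)>\lam(v)$ for every $v$ with $L(v)\geq L(u)+2$. One direction is immediate since $\lam(u)>\lam(w^*)$, but the other requires that no vertex of layer $\geq L(u)+2$ is assigned a $\lam$-value strictly between $\lam(w^*)$ and $\lam(u)$. This is a property of the specific greedy/BFS order (in the paper's greedy only vertices in layers $\leq L(u)$ are placed in that window) and does \emph{not} follow from the staircase alone; it is exactly the content of the two bullet points the paper proves about its greedy, and it needs an argument.

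A smaller concern: your ``compatibility property'' is the paper's inductive property 1 (layer $i{+}1$ lies entirely to the right of layer $i$), and you derive it from fact~(ii), whose proof in the paper itself relies on property 1. These have to be established jointly by a single induction over layers; the compatibility property is not an independent key lemma and is not what enforces global consistency of $\lam$ — that consistency comes from the staircase monotonicity together with the exact greedy assignment order.
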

\begin{proof}
As noted, points on both boundaries are ordered, and layers switch between boundaries,
starting with layer number 0 containing just a single left-most point from the top layer.
Say ordered points on the top layer are $t_0,t_1,t_2,\ldots$.
We prove that there exist strictly increasing numbers $i_0=0,i_2,i_4,\ldots$ such that layer number 0 consists of $t_0$,
and then any layer $2k$ consists of consecutive points $t_{i_{2k-2}+1},\ldots,t_{i_{2k}}$.
Similarly, for points $b_0,b_1,\ldots$ on bottom layer, there exists numbers $i_1,i_3,\ldots$ defining analogous ranges.
Denote by $\last(q)$ the last point (with largest coordinates) in layer $q$.
We prove by induction, in a given order, some intuitive properties (considering without loss of generality odd layer):
\begin{enumerate}
\item All points from layer $2k+1$ are to the right of all points from layer $2k$ (or all points from layer $2k$ are above layer $2k-1$).
\item All points from layer $2k+1$ are adjacent to $\last(2k)$.
\item Layer $2k+1$ is formed by consecutive points $b_{i_{2k-1}+1},\ldots,b_{i_{2k+1}}$.
\item Ordered points from layer $2k$ are adjacent to increasing prefixes of points $b_{i_{2k-1}+1},\ldots,b_{i_{2k+1}}$.
This means that, firstly, any point $t_j$ with $L(t_j)=2k$ is adjacent exactly to points $b_{i_{2k-1}+1}, \ldots, b_q$
from layer $2k+1$, for some $q \leq i_{2k+1}$.
Secondly, for $t_{j+1}$ with $L(t_{j+1})=2k$, $t_{j+1}$ is adjacent to points $b_{i_{2k-1}+1}, \ldots, b_r$ with $q \leq r$.
\end{enumerate}
The base of layer 0 is apparent, except for the third property, which can be done as in the induction step.
Now consider layer $2k+1$.
As all points from layer $2k$ are adjacent to $\last(2k-1)$, meaning they are in $TL_{\last(2k-1)}$,
all these points are to the left of points from layer $2k+1$.
Moreover, the last point in layer $2k$ has the largest $y$ coordinate, thus if any point from layer $2k+1$
is adjacent to some point from layer $2k$, then it is also adjacent to $\last(2k)$.
This gives us the first two properties.

\begin{figure}[h]
\begin{center}
  \includegraphics[scale=1]{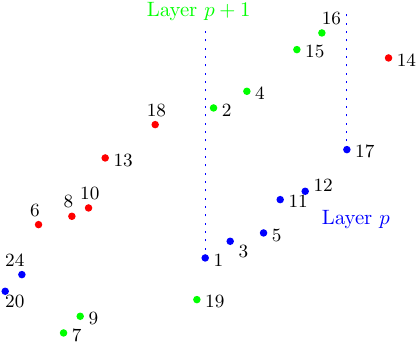}
\end{center}
\caption{Considering layers $p$ and $p+1$, a point with $\lam$ value of 1 is not adjacent to any point from layer $p+1$,
an empty prefix.
The point with value 3 is adjacent to just point with value 2, so range $[2,2]$.
5 and 11 are adjacent to prefix $[2,4]$.
12 is adjacent to $[2,15]$, and finally, 17 is adjacent to all points from layer $p+1$.}
\label{Fig:Prefixes}
\end{figure}

Now, by definition, if points $b_q, b_r$ with $r>q$ are adjacent to some point $v$ on the top boundary,
then all points $b_q, b_{q+1}, \ldots, b_r$ are adjacent to $v$.
Thus, if $b_{i_{2k-1}+1}$ is adjacent to $\last(2k)$, we get the third property.
But it must be adjacent, as otherwise it would be above $\last(2k)$, and then layer $2k+1$ would be empty.
We can apply the above principles to any point from layer $2k$ - it either neighbours the first point in layer $2k+1$ or no point from this layer.
This means any point from layer $2k$ neighbours prefix of points from layer $2k+1$ (possibly empty, possibly full layer).
Moreover $t_{j+1}$ is adjacent to the same points from layer $2k+1$ that $t_j$ is, and possibly more, as $t_{j+1}$ is above $t_j$.
See Figure~\ref{Fig:Prefixes}.

\begin{figure}[h]
\begin{center}
  \includegraphics[scale=1]{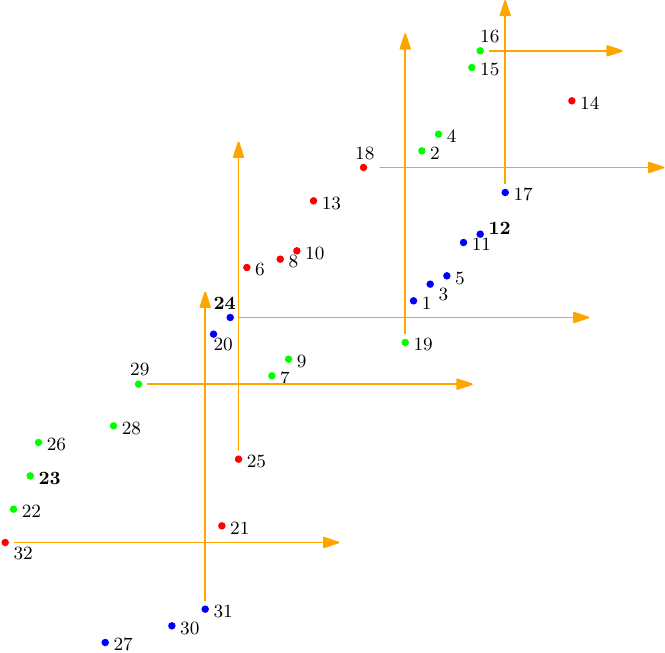}
\end{center}
\caption{Boundary points with their $\lam$ values. The point with value 23 can reach point 12 using five edges,
by path 23-21-20-19-18-12.
But 23 cannot reach 24 using two edges, it needs four.}
\label{Fig:Lambda}
\end{figure}

By the definition of layers, for points $u,v$, $d(u,v) \geq |L(u)-L(v)|$.
If $d(u,v)=|L(u)-L(v)|$, we say that there is a \emph{quick path} between them.
Going back to the statement of the lemma, it says there is such ordering $\lam(v)$ applied to boundary points,
that there is a quick path between two points if relations between their $\lam$ values and layer numbers are opposite.
We can create ordering $\lam$ in the following greedy way: starting from $\lam$ value of 1,
always assign current value to the lowest point in the lowest layer such that it is adjacent to no points in the next layer
without $\lam$ values already assigned, then increment current value.
In other words, repeatedly choose the lowest (by layer) possible point
having all neighbours from the next layer already assigned $\lam$ values.
See Figure~\ref{Fig:Lambda} for an example.

For correctness, observe that when we choose $v$ from layer $k$,
for all layers larger than $k$ the last point with assigned value is adjacent to all points
with assigned values in the next layer.
This is by greedy procedure, as assume there is a layer $j>k$ such that $u$, the last point with assigned value in layer $j$,
is not adjacent to $w$, the last point with assigned value in layer $j+1$.
It is only possible if $\lam(w)>\lam(u)$.
But by definition of greedy procedure, there is no reason to choose $w$ at any point after choosing $u$
and before choosing $v$, as no other point from layer $j$ was chosen between these events and procedure always choose
the lowest layer.
Now, using the above, we can observe that at the moment we assign value for point $v$ in layer $k$:
\begin{itemize}
\item There are quick paths from $v$ to all points with assigned values and in layers larger than $k$.
This is true by the choice of $v$ and transitivity. It is given that $v$ is adjacent to all points in layer $k+1$
with assigned values, then the last of these points is adjacent to all points in layer $k+2$ with assigned values, and so on.
\item There are no quick paths from $v$ to any point without assigned value in a layer larger than $k$.
This is clear from the greedy procedure, previous point and the fourth inductive property -
we do have quick paths to points in larger layers up to the last point with an assigned value, and these points cannot be
adjacent to any more points, since they were chosen only when all their neighbours from the next layer got assigned values. 
\qedhere
\end{itemize}
\end{proof}

By Lemma~\ref{Lem:Lambda}, we are able to detect when quick paths exist, and to complete knowledge about distances between boundary points we observe the following:
\begin{property}
For any boundary points $u,v$ with $L(v) \geq L(u)$, $d(u,v)$ is equal to either $L(v)-L(u)$ or $L(v)-L(u)+2$.
\label{Prop:Diff2}
\end{property}
\begin{proof}
First let us argue that $d(u,v) \leq L(v)-L(u)+2$.
We observed $\last(i)$ is adjacent to all points in layer $i+1$.
Thus, $(u,\last(L(u)-1),\last(L(u)),\ldots,\last(L(v)-1),v)$ is always a correct path with length $L(v)-L(u)+2$.
By definition of layers, $d(u,v)$ cannot be less than $L(v)-L(u)$.
Finally, by the Property~\ref{Prop:Bounds} there is a shortest path that alternates between boundaries,
so it cannot be of length $L(v)-L(u)+1$, as we cannot change parity.
\end{proof}

To simplify our proofs, we will add some points to the original set.
For each point $v$ on the bottom boundary and from the original input, we add point $(v_x+\eps, v_y-\eps)$.
It is easy to see that such a point lies on the bottom boundary, adding it does not change distances between any
existing points, and it removes $v$ from the bottom boundary.
Then we change numeration to integer numbers again, increasing the range of numbers by some constant factor.
Similarly, for any original point $v$ on top boundary, we add $(v_x-\eps, v_y+\eps)$.
What we achieve is that after this change no point from the original input lies on the boundary,
which reduces the number of cases one needs to consider when assigning labels (only to the original points).

\begin{figure}[h]
\begin{center}
  \includegraphics[scale=1]{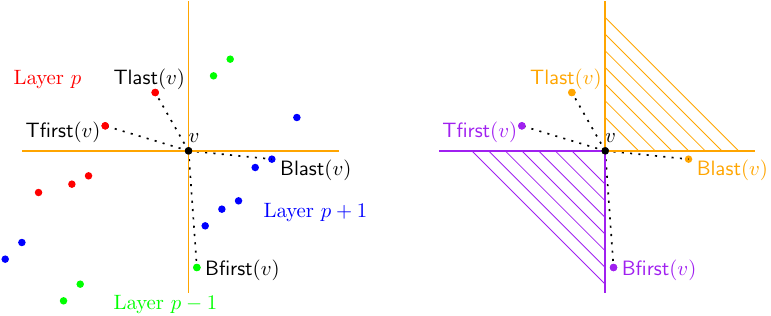}
\end{center}
\caption{On the left: boundary points adjacent to $v$. Orange lines represent $\TL_v$ and $\BR_v$.
$v$ is adjacent to the two last points in red layer $p$, one last point in green layer $p-1$, and the first five points in layer $p+1$.
On the right: Two orange extreme neighbours are needed for paths to points in marked $\TR_v$, and two purple ones for $\BL_v$.}
\label{Fig:Vision}
\end{figure}

\begin{figure}[h]
\begin{center}
  \includegraphics[scale=1]{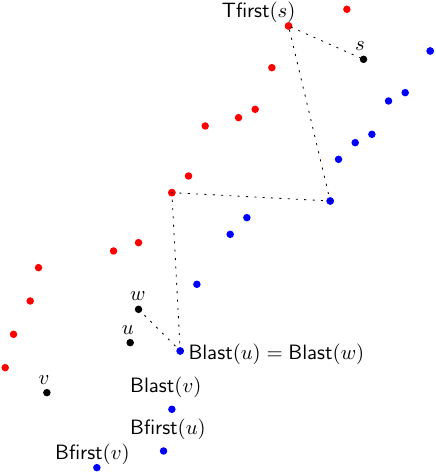}
\end{center}
\caption{We have $d(u,v)=2$, even though all $\Bfirst(v),\Blast(v),\Bfirst(u),\Blast(u)$ are different, but their ranges do intersect.
Moreover, there is a shortest path from $w$ to $s$ with the second point being $\Blast(w)$ and the penultimate point being $\Tfirst(s)$.}
\label{Fig:Dist}
\end{figure}

Now let us focus on any point $v$ not on the boundary.
Assume $v$ is adjacent to some points from layers $i$ and $j$, $j>i$.
It cannot be that $j>i+2$, by definition of layers as distances from $p_0$.
Thus, $v$ is adjacent to points from at most three (consecutive) layers.
We note that $v$ is adjacent to a consecutive segment of ordered points from any layer $i$.
Let us denote by $\Bfirst(v)$ and $\Blast(v)$ the first and last points on the bottom boundary adjacent to $v$
and by $\Tfirst(v)$ and $\Tlast(v)$ the first and last points on the top boundary adjacent to $v$.
Consult Figure~\ref{Fig:Vision}.

We can make easy observation on points at distance two:
\begin{property}
For any two points $u,v$, $d(u,v) \leq 2$ is equivalent to \newline $[\Bfirst(u),\Blast(u)] \cap [\Bfirst(v),\Blast(v)] \neq \emptyset$
or $[\Tfirst(u),\Tlast(u)] \cap [\Tfirst(v),\Tlast(v)] \neq \emptyset$.
\label{Prop:Dist2}
\end{property}
This is since by Property~\ref{Prop:Bounds}, we must have a path between $u,v$ at distance two going through a single point on the boundary.
In the case of $d(u,v)=1$, assume without loss of generality that $u \in TL_v$, then
$[\Tfirst(u),\Tlast(u)] \subseteq [\Tfirst(v),\Tlast(v)]$, and ranges are never empty.

Considering points at a distance of at least three, we have the following:
\begin{lemma}
For any two non-boundary points $u,v$ with $d(u,v)>2$ and $u_x<v_x$, there is a shortest path from $u$ to $v$ with the second point being
either $\Blast(u)$ or $\Tlast(u)$ and the penultimate point being either $\Bfirst(v)$ or $\Tfirst(v)$.
\label{Lem:Dist3}
\end{lemma}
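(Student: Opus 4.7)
The plan is to take any shortest path $u=q_0, q_1, \ldots, q_{d-1}, q_d=v$ whose internal vertices lie on alternating boundaries (which exists by Property~\ref{Prop:Bounds}) and to perform two independent substitutions: replace $q_1$ by a point in $\{\Tlast(u), \Blast(u)\}$ and replace $q_{d-1}$ by a point in $\{\Tfirst(v), \Bfirst(v)\}$. A preliminary observation that will power every contradiction: since $d > 2$ forces $u \not\sim v$ and $u_x < v_x$, we get $v \in \TR_u$, equivalently $u_y < v_y$.

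For the first substitution, I suppose without loss of generality that $q_1$ lies on the top boundary (the bottom case is symmetric with $\Blast(u)$ replacing $\Tlast(u)$). Emptiness of $\TL_{q_1}$ forces $q_1 \in \TL_u$. The candidate $q_1' = \Tlast(u)$ dominates $q_1$ on the top boundary, so $\Tlast(u)_x \geq {q_1}_x$ and $\Tlast(u)_y \geq {q_1}_y$; since $q_2 \in \BR_{q_1}$ already satisfies ${q_2}_y < {q_1}_y \leq \Tlast(u)_y$, the only remaining condition for $\Tlast(u) \sim q_2$ is ${q_2}_x > \Tlast(u)_x$. When it holds, I substitute $q_1 \leftarrow q_1'$. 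Otherwise, in the hard case ${q_1}_x < {q_2}_x < \Tlast(u)_x < u_x$, and combined with $q_2 \not\sim u$ (else the path shortcuts, contradicting $d(u,v)=d$), this forces $q_2 \in \BL_u$. I then run a direct induction on $i \geq 2$ showing $q_i \in \BL_u$: each $q_i$ sits on the boundary opposite to $q_{i-1}$, its relative position is pinned down by the empty quadrant of $q_{i-1}$, and non-adjacency to $u$ (to avoid shortcuts) traps it in $\BL_u$. Applying this to $q_{d-1}$ yields the contradiction: if $q_{d-1}$ is on the top boundary then $v \in \BR_{q_{d-1}}$ combined with $q_{d-1} \in \BL_u$ gives $v_y < u_y$, so $v \in \BR_u$ and $u \sim v$, contradicting $d > 2$; if $q_{d-1}$ is on the bottom boundary then $v \in \TL_{q_{d-1}}$ gives $v_x < {q_{d-1}}_x < u_x$, contradicting $u_x < v_x$. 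Hence the hard case never arises.

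The second substitution follows the mirror-image argument on the reversed path: from $v$'s vantage point $u$ lies to the left, so $\Tfirst(v)$ and $\Bfirst(v)$ take the role of $\Tlast(u)$ and $\Blast(u)$, and the analogous induction traps every $q_i$ with $1 \leq i \leq d-2$ in $\TR_v$. In particular $q_1 \in \TR_v$, but $q_1$ is now in $\{\Tlast(u), \Blast(u)\}$ after the first substitution, and the preliminary observation $u_y < v_y$ kills both options: $\Tlast(u)_x < u_x < v_x$ excludes $\Tlast(u) \in \TR_v$, while $\Blast(u)_y < u_y < v_y$ excludes $\Blast(u) \in \TR_v$. The main obstacle is spotting the right ``trap'' invariant in the hard case --- namely, that every later $q_i$ is confined to one quadrant of the endpoint under consideration --- after which the adjacency of $q_{d-1}$ to the other endpoint, combined with the empty quadrant of one of its boundaries, produces an immediate coordinate clash.
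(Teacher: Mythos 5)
Your proof is correct, but it takes a genuinely different route from the paper's. The paper's argument relies on the layer and $\lam$-ordering machinery: it fixes the penultimate point $w$, observes $w \in \TR_u$, and then reasons via $L(w)$, Property~\ref{Prop:Diff2}, and Lemma~\ref{Lem:Lambda} to decide which of $\Blast(u)$ or $\Tlast(u)$ can serve as $q_1$ (the choice depending on where $L(w)$ sits relative to $L(\Tlast(u))$ and $L(\Blast(u))$). Your argument is purely geometric and bypasses the layer structure entirely: you attempt the substitution directly, isolate the single coordinate inequality that could obstruct it, and show that this ``hard case'' would by induction trap every later path vertex in $\BL_u$ (resp.\ $\TR_v$ for the mirror), forcing a coordinate clash with $v$'s (resp.\ $u$'s) position. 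This is more elementary and self-contained --- it makes explicit the reasoning behind the paper's terse assertion that $w \in \TR_u$ --- at the cost of being somewhat longer. One small inefficiency: your second substitution need not appeal to the first. Once the mirror induction yields $q_1 \in \TR_v$, the fact that $q_1$ lies on a boundary and $q_1 \sim u$ with $u \in \BL_v$ already gives a contradiction (if $q_1$ is on the top boundary then $u \in \BR_{q_1}$ forces ${q_1}_x < u_x < v_x$; if on the bottom then $u \in \TL_{q_1}$ forces ${q_1}_y < u_y < v_y$), so the two substitutions are fully independent. Both proofs are sound.
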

\begin{proof}
We will prove statement for the second point, as the penultimate point is symmetric.
By Property~\ref{Prop:Bounds} there always exists a shortest path $P$ with all but extreme points lying on alternating boundaries,
with $P=(u=q_0,q_1,q_2,\ldots,w,q_{d(u,v)}=v)$, so we denote penultimate point by $w$.

Consider layer number of $w$.
As $d(u,v)>2$ and $u_x<v_x$, it must be that $v,w \in TR_u$ and so $L(w) \geq \min(L(\Tlast(u)),L(\Blast(u)))$.
If $L(w) \geq \max(L(\Tlast(u)),L(\Blast(u)))$, then by Property~\ref{Prop:Diff2} and Lemma~\ref{Lem:Lambda}
we can replace $q_1$ with $\Blast(u)$ or $\Tlast(u)$ (which have the largest $\lam$ values in their layers from
neighbours of $w$), while keeping the length of $P$ and $w$ as penultimate point.
We are left with $L(w) = \min(L(\Tlast(u)),L(\Blast(u)))$.
Assume $L(\Blast(u))>L(\Tlast(u))$, so $L(w)=L(\Tlast(u)$ and also $w>\Tlast(u)$.
Then $w$ is adjacent to $\last(L(w)-1) \in \BR_u$ and thus also to $\Blast(u)$, so we can have $q_1=\Blast(u)$.
In other case, we could similarly set $q_1=\Tlast(u)$.
Thus, we can always change $q_1$ to be $\Blast(u)$ or $\Tlast(u)$, without changing $w$.
\end{proof}

We established ways to determine the distance between any points using distances between specific boundary points.
Additionally, observe that all conditions from Lemma~\ref{Lem:Lambda} and Property~\ref{Prop:Dist2}
can be checked using just $\lam$ values and layer numbers.
That is, for $u,v$ on the same boundary we have $u \leq v$ iff $(L(u),\lam(u)) \leq_{lex} (L(v),\lam(v))$. 

At this stage, we could create labels of length $7\log{n}+\Oh(1)$, by storing for each point $v$ coordinates 
$v_x,v_y$, and for all points of interest $\Bfirst(v),\Blast(v),\Tfirst(v),\Tlast(v)$, their $\lam(\cdot)$ and $L(\cdot)$ values.
As a point is adjacent to at most three layers, all four layer numbers can be stored on just $\log{n}+\Oh(1)$ bits.
Coordinates allow us to check for distance 1, distance two is checked by using Property~\ref{Prop:Dist2},
and larger distances by Property~\ref{Prop:Diff2}, Lemma~\ref{Lem:Lambda} and~\ref{Lem:Dist3}.

\subsection{Auxiliary points}

\begin{figure}[h]
\begin{center}
  \includegraphics[scale=1]{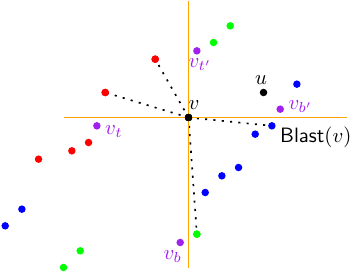}
\end{center}
\caption{Four points added for a point $v$: $v_{b}$ and $v_{b'}$ on bottom boundary, $v_{t}$ and $v_{t'}$ on top.
We have a property that whenever point $u$ is adjacent to $v_{b'}$, it was already adjacent to $\Blast(v)$.}
\label{Fig:Add4}
\end{figure}

To better manage detecting points at distance 1 without explicitly storing coordinates,
we will add, for each point in the set $S$, four additional artificial points, two on each boundary.
Consider $v$ from the initial set and its bottom-right quadrant $\BR_v$.
We add two points $v_{b}=(v_x-\epsilon, \Bfirst(v)_y-\epsilon)$ and $v_{b'}=(\Blast(v)_x+\epsilon,v_y+\epsilon)$ to the set $S$ of points.
See Figure~\ref{Fig:Add4}.
Then, we change the numeration of coordinates so that we still use the permutation of natural numbers up to $|S|$.
This is repeated for all the initial points.
First, we check that this addition did not disturb the properties of the points too much:

\begin{lemma}
All added points are on the bottom boundary.
Moreover, for any two points $u,w \in S$, $d(u,w)$ remains the same.
\end{lemma}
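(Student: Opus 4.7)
The plan has two parts matching the two claims of the lemma.

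For the first claim, that each added point lies on the bottom boundary of the augmented set, I would show that the $\BR$ quadrant of each new point is empty. Take $v_b=(v_x-\eps,\Bfirst(v)_y-\eps)$ and suppose toward contradiction that some $p$ satisfies $p_x>v_x-\eps$ and $p_y<\Bfirst(v)_y-\eps$. Since the original coordinates are integers and $\eps$ is tiny, this forces $p_x\geq v_x$ and $p_y\leq\Bfirst(v)_y-1$, so $p\in\BR_v$. If $p$ is not already on the bottom boundary, then iterating via transitivity (any point off the bottom boundary has a nonempty $\BR$) produces a bottom-boundary point $q\in\BR_v$ with $q_y\leq p_y<\Bfirst(v)_y$; this is a bottom-boundary neighbor of $v$ strictly below $\Bfirst(v)$, contradicting the definition of $\Bfirst(v)$ as the first bottom-boundary neighbor of $v$ (smallest $x$- and $y$-coordinate). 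The analogous argument for $v_{b'}=(\Blast(v)_x+\eps,v_y+\eps)$ uses the maximality of $\Blast(v)_x$.

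For the second claim, the key structural observation I would establish is the neighborhood inclusion
\[
N(v_b)\subseteq N(\Bfirst(v))\qquad\text{and}\qquad N(v_{b'})\subseteq N(\Blast(v))
\]
taken in the \emph{augmented} point set. Indeed, since $v_b$ lies on the bottom boundary by part one, any neighbor $p$ of $v_b$ lies in $\TL_{v_b}$, so $p_x<v_x-\eps<\Bfirst(v)_x$ and $p_y>\Bfirst(v)_y-\eps$. Because no point has a $y$-coordinate strictly between $\Bfirst(v)_y-\eps$ and $\Bfirst(v)_y$ (there would have to be an original integer value there, which doesn't exist), we conclude $p_y>\Bfirst(v)_y$, so $p\in\TL_{\Bfirst(v)}$, i.e.\ $p$ is adjacent to $\Bfirst(v)$. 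The case of $v_{b'}$ is symmetric: any neighbor $p$ of $v_{b'}$ satisfies $p_x<\Blast(v)_x$ and $p_y>v_y>\Blast(v)_y$, giving $p\in\TL_{\Blast(v)}$.

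With these inclusions in hand, distance preservation follows by a substitution argument: given any walk between two original points $u,w\in S$ in the augmented graph, replace each occurrence of $v_b$ by $\Bfirst(v)$ and each occurrence of $v_{b'}$ by $\Blast(v)$. Adjacency at every transition is preserved by the inclusions, so the resulting walk has the same length and lives entirely within the original graph. Since the original graph is a subgraph of the augmented one, distances can only decrease, and this substitution shows they cannot.

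The main subtlety I anticipate lies in ensuring the neighborhood inclusion holds \emph{against all other added points} and not just the originals — this matters because a walk may string together two consecutive added points. The argument above in fact delivers exactly this, because it only uses the coordinates of $v_b$ (or $v_{b'}$) and the position of $\Bfirst(v)$ (resp.\ $\Blast(v)$), not any integrality assumption about the other endpoint. The remaining care is routine bookkeeping around the re-numeration back to integer coordinates and verifying the $\eps$-perturbations never destroy or create an adjacency between two original points.
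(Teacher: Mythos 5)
Your proof is correct and follows essentially the same approach as the paper: you establish that each added point has an empty $\BR$ quadrant (hence lies on the bottom boundary) and then prove the neighborhood inclusions $N(v_b)\subseteq N(\Bfirst(v))$ and $N(v_{b'})\subseteq N(\Blast(v))$, exactly as the paper does. Your substitution argument for distance preservation spells out more carefully what the paper compresses into ``these points cannot offer any shortcuts,'' but the underlying idea is the same.
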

\begin{proof}
Considering the first property, we need to observe that when adding a point, its bottom-right quadrant is empty.
For $v_{b'}$ it holds as the point is between $\Blast(v)$ and the next point on the bottom boundary on both axes.
Thus it changes the status of no point on the bottom boundary and itself is on this boundary.
We have a similar situation with $v_{b}$.

For the second property, we notice that any point adjacent to $v_{b'}$ is also adjacent to $\Blast(v)$.
Since $v_{b'}$ and $\Blast(v)$ lie on the bottom boundary, any adjacent point must be in their top-left quadrant.
As $v_{b'}=(\Blast(v)_x+\epsilon,v_y+\epsilon)$ and there are no points with $x$-coordinate between $\Blast(v)_x$ and $\Blast(v)_x+\epsilon$,
if some point is to the left of $v_{b'}$, it is also to the left of $\Blast(v)$,
and by definition we have $v_{b'}>v_y>\Blast(v)_y$.
Similarly, any point adjacent to $v_{b}$ is also adjacent to $\Bfirst(v)$.
Therefore, these points cannot offer any shortcuts in existing shortest paths.
\end{proof}

Similarly, for each point in the initial set, we add two points on the upper boundary.
That is, consider $v$ and $\TL_v$.
We add two points $v_{t}=(\Tfirst(v)_x-\epsilon,v_y-\epsilon)$ and $v_{t'}=(v_x+\epsilon, \Tlast(v)_y+\epsilon)$ to the set $S$ of points.
Then, we again change the numeration of coordinates.
This is symmetric and has the same properties.

\begin{figure}[h]
\begin{center}
  \includegraphics[scale=1.0]{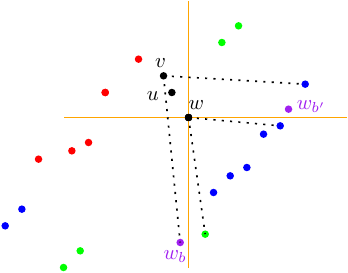}
\end{center}
\caption{We have $w \in \BR_v$, and $\Bfirst(v) < \Bfirst(w) < \Blast(w) < \Blast(v)$.
Meanwhile, $u \notin \BR_w$, as $\Blast(u)=w_{b'}>\Blast(w)$.
Auxiliary points which would be added for $u,v$ are not shown to avoid clutter.}
\label{Fig:Incl}
\end{figure}

After adding four auxiliary points for all initial points, we have the desired property:
\begin{lemma}
For any two points $v,w$ from the initial set, $w \in \BR_v$ is equivalent to $\Bfirst(v) < \Bfirst(w) \leq \Blast(w) < \Blast(v)$.
Moreover, $w \in \TL_v$ is equivalent to $\Tfirst(v) < \Tfirst(w) \leq \Tlast(w) < \Tlast(v)$.
\label{Lem:Dist1}
\end{lemma}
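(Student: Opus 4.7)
The plan is to prove the bottom-boundary equivalence; the top-boundary claim is fully symmetric, obtained by swapping the roles of $\BR$ and $\TL$, and of $\Bfirst,\Blast$ with $\Tfirst,\Tlast$.

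For the forward direction $w\in\BR_v \Rightarrow \Bfirst(v)<\Bfirst(w)\leq\Blast(w)<\Blast(v)$, I first invoke transitivity of the $\BR$-relation to get $\BR_w\subseteq\BR_v$; every bottom-boundary neighbour of $w$ is then also a bottom-boundary neighbour of $v$, yielding the non-strict chain $\Bfirst(v)\leq\Bfirst(w)\leq\Blast(w)\leq\Blast(v)$. To upgrade the outer inequalities to strict ones, I would use the auxiliary points attached to $w$. The point $w_b=(w_x-\epsilon,\Bfirst(w)_y-\epsilon)$ lies in $\BR_v$ (because $w_x-\epsilon>v_x$ and $\Bfirst(w)_y-\epsilon<w_y<v_y$) but not in $\BR_w$ (because $w_{b,x}<w_x$), so it is a bottom-boundary point adjacent to $v$ but not to $w$; a direct coordinate comparison shows $w_b$ sits strictly before $\Bfirst(w)$ in the boundary order, which gives $\Bfirst(v)\leq w_b<\Bfirst(w)$. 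Symmetrically, $w_{b'}$ is a bottom-boundary point in $\BR_v\setminus \BR_w$ sitting strictly after $\Blast(w)$, so $\Blast(w)<w_{b'}\leq\Blast(v)$.

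For the reverse direction, I would assume $\Bfirst(v)<\Bfirst(w)\leq\Blast(w)<\Blast(v)$ and, for contradiction, that $w\notin\BR_v$; then $w$ lies in one of $\TL_v$, $\TR_v$, or $\BL_v$, and in each case I would derive a contradiction by playing the same game in reverse, with the auxiliary points of $v$. If $w\in\TL_v$, then $v_b\in\BR_w$, lies on the bottom boundary, and sits strictly before $\Bfirst(v)$, so $\Bfirst(w)\leq v_b<\Bfirst(v)$. If $w\in\TR_v$, I split on whether $w_x\leq\Blast(v)_x$: in that case $v_{b'}\in\BR_w$ and sits strictly after $\Blast(v)$, giving $\Blast(w)>\Blast(v)$; otherwise every point of $\BR_w$ has $x$-coordinate larger than $\Blast(v)_x$, so $\Bfirst(w)>\Blast(v)$, contradicting $\Bfirst(w)\leq\Blast(w)<\Blast(v)$. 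The case $w\in\BL_v$ is handled identically, splitting on whether $\Bfirst(v)_y<w_y$ and using $v_b$ in one sub-case and an analogous $y$-coordinate argument in the other.

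The main subtlety I expect is bookkeeping what $\Bfirst(w)$ and $\Blast(w)$ actually denote \emph{after} insertion of all $4n$ auxiliary points: another vertex's auxiliary point could in principle become the current $\Bfirst(w)$, and I do not want to enumerate these possibilities. The arguments above are designed to be agnostic about this: I never identify $\Bfirst(w)$ or $\Blast(w)$ explicitly, I only compare them against the specific auxiliary points $v_b, v_{b'}, w_b, w_{b'}$ using coordinate inequalities that remain valid regardless of what else has been inserted, and I rely on the previous lemma to guarantee that these four auxiliary points really do lie on the bottom boundary after all insertions.
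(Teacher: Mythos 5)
Your proof is correct and follows essentially the same approach as the paper's: transitivity gives the non-strict chain of inequalities in the forward direction, the auxiliary points $w_b,w_{b'}$ (respectively $v_b,v_{b'}$) are used to upgrade to strict inequalities (respectively to produce a witness for the converse), and everything is verified through direct coordinate comparisons, which is exactly what makes the argument robust against the bookkeeping issue you flag. The only organizational difference is in the reverse direction: the paper proves the contrapositive and observes that $w\notin\BR_v$ means $w_x<v_x$ or $w_y>v_y$; under $w_x<v_x$ it suffices to show $\Bfirst(v)>\Bfirst(w)$ (the other inequality is not needed), and under $w_y>v_y$ it suffices to show $\Blast(w)>\Blast(v)$. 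Crucially, the for-contradiction assumption $\Bfirst(v)\le\Bfirst(w)$ is what supplies the inequality $v_{b,y}<\Bfirst(v)_y\le\Bfirst(w)_y<w_y$ needed to place $v_b\in\BR_w$, so no further sub-casing is required. Your three-quadrant split with further sub-cases inside $\TR_v$ and $\BL_v$ covers the same ground correctly but is a bit longer than necessary; in particular your $\TR_v$ and $\BL_v$ sub-cases are precisely the ones the paper's choice of contrapositive conditions avoids having to distinguish.
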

\begin{proof}
The cases for both boundaries are symmetrical, so we focus on the bottom one.

If $w \in \BR_v$, then by transitivity $v$ is adjacent to $\Bfirst(w)$, and then by definition of $w_b$, $v$ is also adjacent to $w_{b}$.
As $w_{b}<\Bfirst(w)$, we get $\Bfirst(v)<\Bfirst(w)$.
Analogous facts hold for $w_{b'}$, therefore implication in the right direction holds.

We consider the left direction and use contraposition.
Firstly, we want to show that if $w \notin \BR_v$, then either $\Bfirst(v)>\Bfirst(w)$ or $\Blast(w)>\Blast(v)$.
$w \notin \BR_v$ means $w_x < v_x$ or $w_y> v_y$. Assume $w_x < v_x$ and $\Bfirst(v) \leq \Bfirst(w)$.
This can be only if $\Bfirst(v) = \Bfirst(w)$, since $w$ is to the left of $v$ and points on boundaries are ordered.
As $v_{b}$ is below $\Bfirst(v)$ and between $w_x$ and $v_x$, it must be that $v_{b} \in \BR_w$.
So we have $v_{b} \in \BR_w$ and $v_{b} < \Bfirst(v)=\Bfirst(w)$, a contradiction, as then $v_{b}$ should be $\Bfirst(w)$.
Similarly using $v_{b'}$ we can show that assuming $w_y > v_y$ and $\Blast(w) \leq \Blast(v)$ leads to a contradiction.
See Figure~\ref{Fig:Incl}.
\end{proof}

The above lemma is useful when testing for adjacency of points -- we do not need explicit coordinates to check it.
Now, we are able to create labels of length $5\log{n}+\Oh(1)$,
as there is no longer a need to store coordinates, thus just four $\lam$ values,
and additionally layer numbers using only $\log{n}+\Oh(1)$ bits.
We can still improve on this, getting our final result in the next Section.

\section{Final Scheme of Size 3$\log{n}$}
\label{Sec:3log}

In this Section, the final improvement to label sizes is achieved, by collapsing two \emph{pairs} of $\lam$ values into
just two values, with an additional constant number of bits.

\begin{lemma}
We can store for each input point $v$ two integers $v_{x'},v_{y'}$ with values in $\Oh(n)$, bit values $v_{binf},v_{tinf}$,
and layer numbers $L(\Bfirst(v)),L(\Blast(v)),L(\Tfirst(v)),L(\Tlast(v))$ smaller than $n$,
such that distance queries for pairs of points can be answered using these values only.
\end{lemma}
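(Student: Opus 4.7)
The plan is to compress the four $\lambda$-values used in the $5\log n$ scheme into two integers per vertex by equalising specific pairs of $\lambda$-values. Following the idea sketched in the overview, I would pair $\Blast(v)$ with $\Tfirst(v)$ (the ``bottom-right'' extreme with the ``top-left'' extreme) and $\Bfirst(v)$ with $\Tlast(v)$ (``bottom-left'' with ``top-right''), replacing each pair by a single common tag $v_{x'}$, respectively $v_{y'}$, chosen to sit between the two original $\lambda$-values in the boundary ordering of Lemma~\ref{Lem:Lambda}. The layer numbers $L(\Bfirst(v)),L(\Blast(v)),L(\Tfirst(v)),L(\Tlast(v))$ are kept, and since they span at most three consecutive values, they can be stored on $\log n + \Oh(1)$ bits.

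The first step is to check that this collapsing is safe for queries with $d(u,v)\geq 3$. By Lemma~\ref{Lem:Dist3} combined with Lemma~\ref{Lem:Lambda} and Property~\ref{Prop:Diff2}, such a query with $u_x<v_x$ reduces to comparing a $\lambda$-value of a \emph{right} extreme of $u$ with a $\lambda$-value of a \emph{left} extreme of $v$, i.e.\ a value from $\{\lambda(\Blast(u)),\lambda(\Tlast(u))\}$ against one from $\{\lambda(\Bfirst(v)),\lambda(\Tfirst(v))\}$. So if I read $u_{x'}$ in place of $\lambda(\Blast(u))$, $u_{y'}$ in place of $\lambda(\Tlast(u))$, and symmetrically $v_{x'}$ in place of $\lambda(\Tfirst(v))$ and $v_{y'}$ in place of $\lambda(\Bfirst(v))$, each required comparison is unaffected provided the new tag has the same order relation to every extreme of every other vertex as the original value. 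I would establish this by showing that $\lambda(\Blast(v))$ and $\lambda(\Tfirst(v))$ always lie in neighbouring layers, so that the prefix-adjacency property in the proof of Lemma~\ref{Lem:Lambda} leaves a nonempty open interval of integer positions in which both can be placed without crossing any other extreme's $\lambda$-value.

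The second step deals with $d(u,v)\leq 2$, which is decided by Property~\ref{Prop:Dist2} and Lemma~\ref{Lem:Dist1} through comparisons of individual $\lambda$-values on a \emph{single} boundary (e.g.\ is $\Blast(u) \leq \Blast(v)$?). The merged tag $v_{x'}$ is a common value for points on two different boundaries, so such a single-boundary comparison can lose information. This is where the two bits come in: $v_{binf}$ records the sign of $\lambda(\Blast(v))-\lambda(\Tfirst(v))$ before merging, and $v_{tinf}$ records the sign of $\lambda(\Bfirst(v))-\lambda(\Tlast(v))$. Together with the four layer numbers (which tell us which boundary each extreme sits on and, hence, when two extremes are comparable at all), these bits reconstruct the lexicographic $(L,\lambda)$ order restricted to either boundary, which is exactly what the short-distance tests require.

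The main obstacle is the global consistency of the merging in the first step: the same boundary point can be an extreme neighbour of many input points, each imposing its own equalisation constraint, and these constraints could in principle conflict. I would resolve this by viewing the collection of all merging constraints as a family of intervals on the original $\lambda$-line, showing via the layer/prefix structure that these intervals are laminar (two constraints are either disjoint or one is contained in the other), and then collapsing them by a single weakly-monotone renumbering of the boundary into $\Oh(n)$ integers. Verifying the non-crossing property and the preservation of every comparison used by the decoder in all corner cases (shared extremes, one point on the boundary, overlapping extreme ranges) is where most of the case-analysis work will lie.
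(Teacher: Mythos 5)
Your pairing of $\Blast(v)$ with $\Tfirst(v)$ and $\Bfirst(v)$ with $\Tlast(v)$ matches the paper, as does the overall plan of storing two collapsed values plus two bits, but the key technical steps are either missing or incorrect. First, your claim that the prefix-adjacency structure ``leaves a nonempty open interval of integer positions in which both can be placed without crossing any other extreme's $\lambda$-value'' is false in general: there can be arbitrarily many boundary points whose $\lam$-values fall strictly between $\lam(\Blast(v))$ and $\lam(\Tfirst(v))$, so no single collapsed value preserves all order relations. The paper does not try to preserve them; it \emph{accepts} that the collapsed value corrupts some quick-path tests, and then argues (Lemma~\ref{Lem:Dist3} plus the case analysis in Property~\ref{Prop:Dec3}) that the corrupted tests are exactly those for the ``wrong'' quadrant ($\BL_v$ for $\Blast(v)$, $\TR_v$ for $\Tfirst(v)$), where the decoder takes a minimum over alternatives and never relies on them. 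Crucially, the collapsed value is not ``any value in between'': it must be $\lam(w)-\eps$ for the specific boundary point $w$ above $v$ with minimum $\lam(w)>\lam(\Blast(v))$, and this exact choice is what makes the argument of Property~\ref{Prop:Dec3} (that no $u\in\BL_v$ with $L(u)<L(\Tfirst(v))$ can land in the gap) go through.

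Second, you misread the role of the two extra bits and thereby leave a genuine hole. In layout~4 one has $\lam(\Blast(v))>\lam(\Tfirst(v))$ (these two extremes lie two layers apart, not in neighbouring layers as you assert), so there is \emph{no} value between them that is simultaneously an increase of $\lam(\Blast(v))$ and a decrease of $\lam(\Tfirst(v))$ --- the only directions of perturbation the decoder can tolerate. Recording the sign of $\lam(\Blast(v))-\lam(\Tfirst(v))$ and claiming the bits ``reconstruct the lexicographic order'' does not resolve this: you still need a single number standing in for two values that order inconsistently against other points. The paper's actual observation is that in this case $\Blast(v)$ is forced to be the last point of its layer, hence its $\lam$-value can be replaced by $+\infty$ (it dominates every point in larger layers anyway), freeing the stored integer to carry the exact $\lam(\Tfirst(v))$; that is what $v_{binf}$ flags. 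Finally, your ``laminar family of merging constraints'' machinery is not needed and is a red herring: each $v$ is assigned its collapsed values independently and fractionally (then globally renormalised), and the only global concern --- that distinct collapsed values do not accidentally coincide in a way that fakes an intersection for the distance-$2$ test --- is handled in Property~\ref{Prop:Dec2} by a short uniqueness argument, not by laminarity.
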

\label{Lem:Main}

\begin{figure}[h!]
\begin{center}
  \includegraphics[scale=0.9]{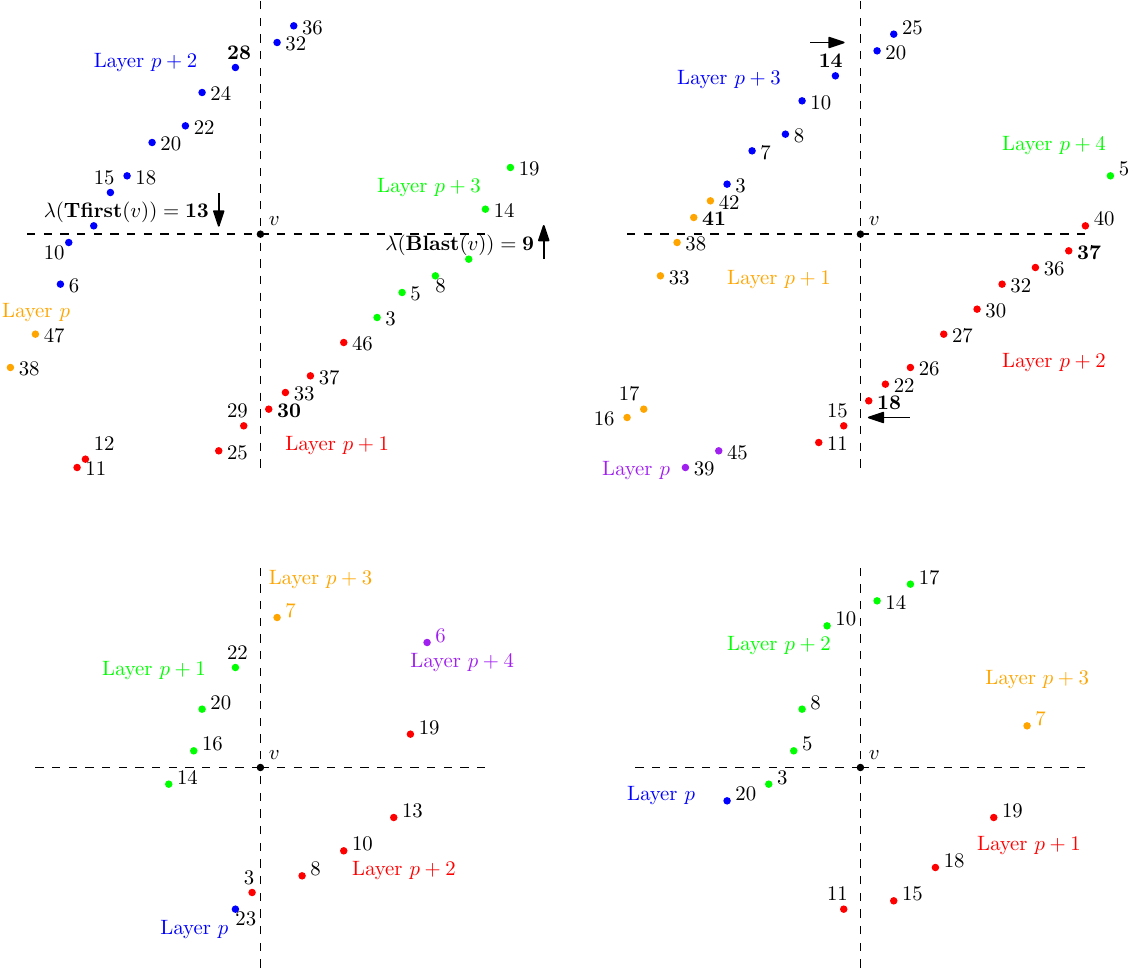}
\end{center}
\caption{In the first case, label of $v$ could store $\lam(\Bfirst(v)),\lam(\Blast(v)),\lam(\Tfirst(v)),\lam(\Tlast(v))$ 
values, which would be 30, 9, 13, and 28.
But instead, we can use in its label just two values, $v_{y'}=13-\eps$ and $v_{x'}=30-\eps$.
In the second case, point $v$ could store values 18, 37, 41, 14.
Instead, we can use $v_{y'}=40-\eps$ and $v_{x'}=18-\eps$.
\newline In the last case, we have point $v$ adjacent to only two layers,
and the lower one is on the bottom boundary.
$v$ could store values 15, 19, 5, and 10.
Instead, we can use in its label just two numbers, $v_{y'}=5$, $v_{x'}=14-\eps$, with two more bits indicating this case.
We use the fact that a point with $\lam$ value of 19 is necessarily the last one in its layer
and we won't need its exact value.}
\label{Fig:Stored}
\end{figure}

\begin{proof}
Let us consider any input point $v$, recall we made sure it does not lie on the boundary.
As previously, we can store $L(\Bfirst(v))$, $L(\Blast(v))$, $L(\Tfirst(v))$ and $L(\Tlast(v))$ on $\log{n}+\Oh(1)$ bits,
as there are no more than $n$ layers, and the differences between these four values are at most $2$.
We would like to store one number instead of $\lam(\Blast(v))$ and $\lam(\Tfirst(v))$, then also one number instead of
$\lam(\Bfirst(v))$ and $\lam(\Tlast(v))$.
We need to consider four possible cases for the layout of layers, see Figure~\ref{Fig:Stored}:
\begin{enumerate}
\item $v$ is adjacent to two layers on the bottom boundary (and then necessarily one on the top).
\item $v$ is adjacent to two layers on the top boundary.
\item $v$ is adjacent to one layer on both boundaries, and the layer on the bottom is higher.
\item $v$ is adjacent to one layer on both boundaries, and the layer on the bottom is lower.
\end{enumerate}
These will be referred to as possible layouts.

First, consider the last case.
We argue that it must be that $\Blast(v)$ is the last point in its layer.
Assume otherwise, so there is a point $w$ on the bottom boundary, with $L(w)=L(\Blast(v))$ and $w>\Blast(v)$,
so necessarily $w \in TR_v$.
This means that there is a point $u$ with $L(u)=L(w)-1$ and $u_y>w_y>v_y$, by definition of layers.
It cannot be $u \in TL_v$, as we assumed the last case, where there are only points from layer $L(u)+2$ in $TL_v$.
But if $u \in TR_v$, and thus $u_x>v_x$, then no point from layer $L(w)+1$ could be in $TL_v$,
as all of them are to the right of $u$, and we reach a final contradiction.
Therefore, $\Blast(v)$ is in this case the last point in its layer.
Notice that this information is easy to store, as it was proven that the last point in a layer has $\lam$ value larger than
all of the points in larger layers, that is, there is always a quick path to such points.
We can store a bit $v_{binf}=1$ indicating this case, effectively using value of infinity instead of exact $\lam(\Blast(v))$,
and then just store unchanged $\lam(\Tfirst(v))$.
We still need to argue that no point from lower layers must use $\Blast(v)$ to reach $v$, and that
checks for distances 1 and 2 works, which will be done later.

Now, consider the three first cases.
In all of them, we have $L(\Blast(v))>L(\Tfirst(v))$, and as these two points are adjacent
$\lam(\Blast(v))<\lam(\Tfirst(v))$ also holds.
Denote by $w$ a boundary point with the smallest $\lam(w)$ value among points with $w_y>v_y$
and $\lam(w)>\lam(\Blast(v))$.
Note that as $\Tfirst(v)$ can be chosen, such $w$ always exists and it must be $\lam(w) \leq \lam(\Tfirst(v))$.
We will store in our label value of $v_{y'}=\lam(w)-\eps$ (to be normalised later),
and as we will see this one number can replace both values of $\lam(\Blast(v)),\lam(\Tfirst(v))$.
We say that value $\lam(\Blast(v))$ is increased to $v_{y'}$, and $\lam(\Tfirst(v))$ is decreased.

\begin{algorithm}[hbt!]
\begin{algorithmic}[1]
  \Function{Encode}{$S$}
  \State \textbf{Input:} set of points $S$, representing permutation graph
  \State \textbf{Output:} labels for all points in $S$
  \Statex
  \State $S' \gets S$
  \For{\texttt{$v \in S$}} \Comment Loop for adding auxiliary points
    \State Add to $S'$ $v_{b}, v_{b'}, v_{t}, v_{t'}$
    \If{$v$ is on the boundary of $S'$}
      \State Add to $S'$ point removing $v$ from boundary
    \EndIf
  \EndFor  
  \State Compute layer numbers for boundary points of $S'$
  \State Compute $\lam$ numbers for boundary points of $S'$
  \For{\texttt{$v \in S$}}
    \If{$\Blast(v)$ is last in its layer} \Comment Check for the special case
      \State $v_{binf} \gets 1$
      \State $v_{y'} \gets \lam(\Tfirst(v))$
    \Else
      \State $v_{binf} \gets 0$
      \State Find boundary $w$ with $w_y>v_y$, $\lam(w)>\lam(\Blast(v))$, and minimum $\lam(w)$
      \State $v_{y'} \gets \lam(w)-\eps$ \Comment Collapsed value for $\lam(\Blast(v)),\lam(\Tfirst(v))$
    \EndIf
    \If{$\Tlast(v)$ is last in its layer}
      \State $v_{tinf} \gets 1$
      \State $v_{x'} \gets \lam(\Bfirst(v))$
    \Else
      \State $v_{tinf} \gets 0$
      \State Find boundary $w'$ with $w'_x>v_x$, $\lam(w')>\lam(\Tlast(v))$, and minimum $\lam(w')$
      \State $v_{x'} \gets \lam(w')-\eps$ \Comment Collapsed value for $\lam(\Tlast(v)),\lam(\Bfirst(v))$
    \EndIf
    \State Store in $\ell(v)$ $L(\Bfirst(v))$ 
    \State Store in $\ell(v)$ constant number of bits encoding $L(\Blast(v)), L(\Tfirst(v)), L(\Tlast(v))$
    \State Store in $\ell(v)$ $v_{binf}, v_{y'}, v_{tinf}, v_{x'}$
    \State Output $\ell(v)$
  \EndFor
  \EndFunction
\end{algorithmic}
\caption{Encoder computing labels for a set of points representing permutation graph.}
\label{Alg:Encode}
\end{algorithm}

We can deal in a similar manner with $\Bfirst(v),\Tlast(v)$ values.
If $\Tlast(v)$ is the last point in its layer (which is always true in the third case of the possible layouts),
we store $v_{tinf}=1$ and an exact value of $\lam(\Bfirst(v))$.
Otherwise, denote by $w'$ a boundary point with the largest $\lam(w')$ among points with $w'_x>v_x$
and $\lam(w')>\lam(\Tlast(v))$.
We will store value of $v_{x'}=\lam(w')-\eps$, and this one value can replace both $\lam(\Tlast(v)),\lam(\Bfirst(v))$.
See Algorithm~\ref{Alg:Encode} for the summary of the encoder work.

We denote by $d((l,i),u)$, for numbers $l,i$ and boundary point $u$, a distance between $u$ and point in layer $l$
with value of $\lam$ equal to $i$, as would be returned by using Lemma~\ref{Lem:Lambda}.
Decoding navigates all possible cases and makes some distance queries, using values of $v_{x'},v_{y'}$.
See Algorithm~\ref{Alg:Decode} for the description of the decoder, some details will become clear later.

\begin{algorithm}[hbt!]
\begin{algorithmic}[1]
  \Function{D}{$\ell(v),\ell(u)$}
  \State \textbf{Input:} labels of two points from the same permutation graph.
  \State \textbf{Output:} distance between points in the graph.
  \Statex
  \Function{Extract}{$\ell(t)$} \Comment Obtaining necessary values from label of $t$
  \State Extract $L(\Bfirst(t)), L(\Blast(t)), L(\Tfirst(t)), L(\Tlast(t))$ from $\ell(t)$
  \State $\lam'(\Tfirst(t)) \gets t_{y'}$ \Comment $\lam'$ are modified values, but to be treated as real $\lam$
  \If{$t_{binf} = 1$}
    \State $\lam'(\Blast(t)) \gets \infty$
  \Else
    \State $\lam'(\Blast(t)) \gets t_{y'}$
  \EndIf
  \State $\lam'(\Bfirst(t)) \gets t_{x'}$
  \If{$t_{tinf} = 1$}
    \State $\lam'(\Tlast(t)) \gets \infty$
  \Else
    \State $\lam'(\Tlast(t)) \gets t_{x'}$
  \EndIf
  \EndFunction
  \Statex
  \State \Call{Extract}{$\ell(u)$}, \Call{Extract}{$\ell(v)$}
  \If{range of $u$ or $v$ is a subset of the range of the other, on any boundary,}
    \State \Return 1
  \EndIf
  \If{ranges of $u$ and $v$ intersect on any boundary}
    \State \Return 2
  \EndIf
  \State $tmp \gets \infty$ \Comment We are left with the case of distance $>2$
  \For {\texttt{$i \in \{\Blast(v),\Tlast(v)\}$}}
    \For {\texttt{$j \in \{\Bfirst(u),\Tfirst(u)\}$}}
      \State $tmp \gets \min(tmp,d([L(i),\lam'(i)] , [L(j),\lam'(j)]))$
    \EndFor
  \EndFor
  \For {\texttt{$i \in \{\Blast(u),\Tlast(u)\}$}}
    \For {\texttt{$j \in \{\Bfirst(v),\Tfirst(v)\}$}}
      \State $tmp \gets \min(tmp,d[(L(i),\lam'(i)] , [L(j),\lam'(j)]))$
    \EndFor
  \EndFor
  \State \Return $d$
  \EndFunction
\end{algorithmic}
\caption{Decoder checking the distance between two points given only their labels.}
\label{Alg:Decode}
\end{algorithm}

To prove correctness, first consider distances of at least three.

\begin{property}
For $u,v$ with $d(u,v) \geq 3$, our scheme will return $d(u,v)$, or a value less than 3.
\label{Prop:Dec3}
\end{property}
\begin{proof}
The formulation is for technical reasons, we will exclude the possibility of returning value less than 3 later.
For some vertex $v$, due to distance, we are interested only in paths from $v$ to points in $\BL_v$ and $\TR_v$.
By Lemma~\ref{Lem:Dist3}, the first ones start in either $\Bfirst(v)$ or $\Tfirst(v)$,
other ones in either $\Blast(v)$ or $\Tlast(v)$.
Let us informally go through what we need to check for any vertex $v$:
\begin{itemize}
\item For any boundary point $p \in \TR_v$ with $d(v,p) \geq 3$, distances between $p$ and $\Blast(v),\Tlast(v)$ are preserved
by the encoding.
\item For any boundary point $p \in \BL_v$ with $d(v,p) \geq 3$, distances between $p$ and $\Blast(v),\Tlast(v)$
as checked by the decoder are not smaller than in the graph (but are allowed to be larger).
\item For any boundary point $p \in \BL_v$ with $d(v,p) \geq 3$, distances between $p$ and $\Bfirst(v),\Tfirst(v)$ are preserved.
\item For any boundary point $p \in \TR_v$ with $d(v,p) \geq 3$, distances between $p$ and $\Bfirst(v),\Tfirst(v)$
as checked by the decoder are not smaller than in the graph.
\end{itemize}

First, consider $\Blast(v)$.
We need that for any boundary point $u$ in $\TR_v$, $d(\Blast(v),u)=d((L(\Blast(v)),v_{y'}), u)$.
This holds by choice of $v_{y'}$, no relation between $\lam$ values has changed.
In the case of the last layout, using infinite value also preserves these relations.

We also need to consider whether we could have reduced distance to some points in $BL_v$
by increasing stored $\lam(\Blast(v))$ value to $v_{y'}$.
As for a quick path to exists relations between layer numbers and $\lam$ values need to be opposite,
increasing $\lam(\Blast(v))$ could introduce a false quick path only for points in layers larger than $L(\Blast(v))$.
For $BL_v$ this is possible only in the last layout, when $\Blast(v)$ is the last point in its layer
and is adjacent to all points in layer $L(\Blast(v))+1$ anyway, by Lemma~\ref{Lem:Lambda}.

\begin{figure}[h]
\begin{center}
  \includegraphics[scale=0.8]{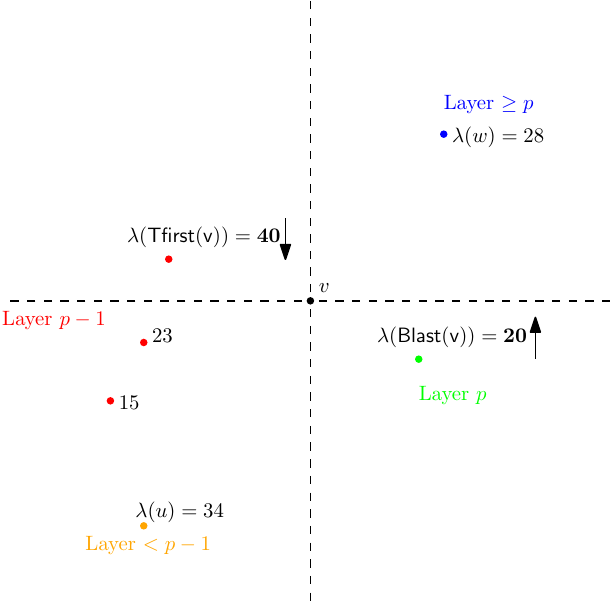}
\end{center}
\caption{$w$ is defined as the point above $v$ with the smallest $\lam$ larger than $\lam(\Blast(v))$.
We show it is impossible for a point $u \in \BL_v$ to have $\lam(u)>\lam(w)$, as it cannot have a single point in layer
$L(\Tfirst(v))$ on a quick path to $w$.}
\label{Fig:Dist3}
\end{figure}

We had a $\Blast(v)$, now consider $\Tfirst(v)$, where the situation is more complicated due to the non-symmetric definition of $v_{y'}$.
We assume $v_{binf}=0$, as otherwise the exact value of $\lam(\Tfirst(v))$ is stored and the decoder does not err.
We need that for any boundary point $u$ in $\BL_v$, $d(\Tfirst(v),u)=d((L(\Tfirst(v)),v_{y'}), u)$.
By examining possible layouts we see that any point $u \in BL_{v}$ with $L(u)>L(\Tfirst(v))$ must have
$L(u)=L(\Blast(v))$, thus $\lam(u)<\lam(\Blast(v))$, so $\lam(u)<v_{y'}<\lam(\Tfirst(v))$ and the distance query is still correct.
If $L(u)=L(\Tfirst(v))$, distance is always 2 no matter $\lam$ values.

Now assume there is boundary point $u \in \BL_{v}$ with $L(u)<L(\Tfirst(v))$ and $v_{y'}<\lam(u)<\lam(\Tfirst(v))$,
which is the only remaining way to produce false result of distance query.
Note that $L(\Tfirst(v))<L(\Blast(v))$, by $v_{binf}=0$.
Recall $w$ is a boundary point with the smallest $\lam(w)$ value larger than $\lam(\Blast(v))$ among points with $w_y>v_y$
thus we have $\lam(\Tfirst(v)) \geq \lam(w) > \lam(\Blast(v))$.
If $L(w)=L(\Tfirst(v))$, then $w=\Tfirst(v)$ and we have nothing to prove, so we assume $L(w)>L(\Tfirst(v))$.
See Figure~\ref{Fig:Dist3}.
It holds that $\lam(u)>\lam(w)$ and $L(u) < L(\Tfirst(v)) < L(w)$, thus $d(u,w)=L(w)-L(u)$.
This means that there is a quick path $P$ from $u$ to $w$ containing exactly one point from each layer in range $[L(u),L(w)]$.

Let $r \in P$ and $L(r)=L(\Tfirst(v))$.
There is no quick path from $u$ to $\Tfirst(v)$, so it must be $r<\Tfirst(v)$, which means $r_y < v_y$.
This in turn means that $r$ is not adjacent to any point in layer $L(\Blast(v))$
with $\lam$ value larger than $\lam(\Blast(v)$.
But then it cannot be that $P$ contains a single point from layer $L(\Blast(v))$, as such a point must be adjacent to
$r$ and also have $\lam$ value of at least $\lam(w)$, meaning larger than $\lam(\Blast(v)$.
Therefore, by contradiction, there cannot be a point $u \in BL_{v}$ with $L(u)<L(\Tfirst(v))$
and $v_{y'}<\lam(u)<\lam(\Tfirst(v))$, which means that for all $u \in \BL_v$
we have $d(\Tfirst(v),u)=d((L(\Tfirst(v)),v_{y'}), u)$.

We also need to consider whether we could have reduced distance to some points in $\TR_v$
by using $v_{y'}$ value instead of $\lam(\Tfirst(v))$.
But in $\TR_v$ there are only points in layers at least as large as $L(\Tfirst(v))$,
for which decreasing value of $\lam(\Tfirst(v))$ cannot decrease output of distance query.

Proof for $v_{x'}$, $\Tlast(v)$, and $\Bfirst(v)$ is symmetric.
\end{proof}

As a side note, let us observe two things that may help in understanding our methods.
It might be $d((L(\Blast(v)),v_{y'}), u) = d(\Blast(v), u)+2$ for point $u \in BL_{v}$,
but by Lemma~\ref{Lem:Dist3} we do not need to store this distance correctly, as there is a shortest path from $u$
to $v$ with the penultimate point not being $\Blast(v)$.
In other words, we do lose some unnecessary information.
Secondly, we can store infinity values for boundary points that are last in their layers because of
layers and $\lam$ values definition -- all points in the following layers have smaller $\lam$ values already.
We could not store 'zero' or 'minus infinity' values for points that are first in their layers,
but there is no need for this.

Now we turn to distances 1 and 2, with the former being easier.

\begin{property}
For $u,v$ with $d(u,v) = 1$, our scheme will return $1$.
For $u,v$ with $d(u,v) \geq 2$, our scheme will never return 1.
\label{Prop:Dec1}
\end{property}
\begin{proof}

Once again, we have several cases of $\lam$ values being increased, decreased, or set to infinity, but can check that
methods from Lemma~\ref{Lem:Dist1} and Property~\ref{Prop:Dist2} still holds for our new values.
The encoder never changes layers of points, so any possible issue is connected only to $\lam$ values.
Consider input points $u,v$ with $v_y > u_y$ and $L(\Blast(v))=L(\Blast(u))$.
By $v_y > u_y$, we have $\Blast(v) \geq \Blast(u)$.
Then if $v_{binf}=u_{binf}=0$, we must have $v_{y'} \geq u_{y'}$.
Indeed, when $\Blast(v) \neq \Blast(u)$, then $u_{y'} < \lam(\Blast(v))<v_{y'}$.
If $\Blast(v) = \Blast(u)$, again $v_{y'} \geq u_{y'}$ as the smallest value of $\lam$ larger than $\lam(\Blast(v))$
for points above $u$ cannot be larger than analogous value for $v$ by $v_y > u_y$.
Finally, $u_{binf}=1$ implies $v_{binf}=1$.
This means that for any $u,v$, we can derive whether $\lam(\Blast(v)) \geq \lam(\Blast(u))$ given $\ell(v),\ell(u)$
simply by comparing values from the labels.
As $\Tfirst(v)$ too is replaced by $v_{y'}$, relation between $\Tfirst(v),\Tfirst(u)$ is retained.
Similarly, we can check that respective inequalities hold for $v_{x'}, u_{x'}$ values.

Overall, this means that relations between points of the same kind (say $\Bfirst(v),\Bfirst(u)$) are retained,
therefore by Lemma~\ref{Lem:Dist1} our labeling correctly outputs 1 exactly when $d(u,v) = 1$.
\end{proof}

Lastly, we are left with the case of distance two.

\begin{property}
For $u,v$ with $d(u,v)=2$, our scheme will return 2.
For $u,v$ with $d(u,v) \geq 3$, our scheme will never return 2.
\label{Prop:Dec2}
\end{property}
\begin{proof}

Using Property~\ref{Prop:Dist2}, whenever $d(u,v)=2$ the decoder reports this correctly, as values of $\lam(\Blast),
\lam(\Tlast)$ can only be increased, and $\lam(\Bfirst),\lam(\Tfirst)$ only decreased, and thus ranges only widen.
Therefore, we need to just exclude the possibility of false intersections, that is, reporting 2 for $d(u,v) > 2$.
First let us note that the case when $\lam(\Blast(u))<\lam(\Bfirst(v))$ but $u_{y'}=v_{x'}$ is impossible to achieve.
This is because whenever $\lam$ value is increased or decreased, it is set to some new unique value, different from all
values existing at the moment, and $\lam(\Blast(u)),\lam(\Bfirst(v))$ are not changed simultaneously.

\begin{figure}[h]
\begin{center}
  \includegraphics[scale=1]{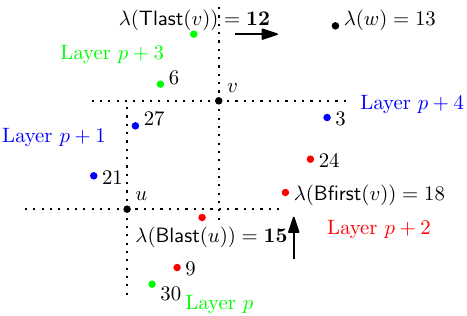}
\end{center}
\caption{Here we consider, for $v>u$ with non-intersecting neighbourhoods on the bottom boundary
and $L(\Blast(u))=L(\Bfirst(v))$, whether we could get $u_{y'}>v_{x'}$,
which would mean a false intersection in the decoding process.
We show that it is impossible for some boundary point $w$ to the right of $v$ to have $\lam(\Tlast(v))<\lam(w)<\lam(\Blast(u))$.}
\label{Fig:Nope}
\end{figure}

Consider input points $v,u$ with $v_y > u_y$, $L(\Blast(v))=L(\Bfirst(u))$, $v_{tinf}=u_{binf}=0$, and $d(u,v)>2$.
From distance constraint, we get that also $v_x > u_x$.
Assume $\Blast(u)<\Bfirst(v)$, meaning ranges of neighbours of $v,u$ on the bottom layer do not intersect. 
Recall that the encoder replaces $\lam(\Blast(u))$ with $u_{y'}$, and $\lam(\Bfirst(v))$ with $v_{x'}$,
where $v_{x'}$ is defined using $\lam(\Tlast(v))$.
We will show that it is impossible that $u_{y'}>v_{x'}$, and thus ranges of $v,u$ still do not intersect.

First, assume $\lam(\Tlast(v))>\lam(\Blast(u))$.
Then, by definition $u_{y'}<\lam(\Tlast(v))$ as $\Tlast(v)$ is above $u$, and $v_{x'}>\lam(\Tlast(v))$,
which means $u_{y'}<v_{x'}$ as needed.
So, we might assume $\lam(\Tlast(v))<\lam(\Blast(u))<\lam(\Bfirst(v))$.
Now, for $u_{y'}>v_{x'}$ to hold, we would need a boundary point $w$ somewhere to the right of $v$,
and with $\lam(\Tlast(v))<\lam(w)<\lam(\Blast(u))$.
By definitions there are no points simultaneously below $u$ and to the right of $\Blast(u)$,
and by $\Blast(u)<\Bfirst(v)$ and $v_y > u_y$, $v$ is to the right of $\Blast(u)$.
Thus, $w$ must be above $u$.
We assumed $v_{tinf}=0$, excluding the third case of possible layouts (Figure~\ref{Fig:Stored}) $v$,
so $L(\Tlast(v))-1=L(\Bfirst(v))=L(\Blast(u))$.
By $\lam(w)<\lam(\Blast(u))$, it also must be that $L(w) \geq L(\Tlast(v))$.
Summing up, we get that $L(\Blast(u))=L(\Bfirst(v))<L(\Tlast(v)) \leq L(w)$.
See Figure~\ref{Fig:Nope}, which depicts the only relevant remaining arrangement of points.

We examine $\Blast(u)$ to show that this case is also impossible.
Since $L(w)>L(\Blast(u))$ and $\lam(w)<\lam(\Blast(u))$, it holds that $d(\Blast(u),w)=L(w)-L(\Blast(u))$
and there is a quick path between these two points, having a single point in each of layers $[L(\Blast(u)),L(w)]$.
But as $\lam(\Tlast(v))<\lam(w)$ and $L(\Tlast(v)) \leq L(w)$, we get $d(\Tlast(v),w)=L(w)-L(\Tlast(v))+2$.
As $\Blast(u)$ is to the left of $v$, the largest adjacent point in layer $L(\Tlast(v))$ has $\lam$ value at most
$\lam(\Tlast(v))$, so smaller than $\lam(w)$.
But this is a contradiction, as then there cannot be a quick path from $\Blast(u)$ to $w$
having a single point in layer $L(\Tlast(v))$.

This means that we can still use Property~\ref{Prop:Dist2} for new stored values.
\end{proof}

By Properties~\ref{Prop:Dec3}, \ref{Prop:Dec1} and \ref{Prop:Dec2}, we have proven the lemma.
\end{proof}

To conclude, we have that $\ell(v)$ consists of the following parts:
\begin{enumerate}
\item $L(\Bfirst(v))$, $L(\Blast(v))$, $L(\Tfirst(v))$ and $L(\Tlast(v))$, all stored on total $\log{n}+\Oh(1)$ bits
due to differences between these values being at most 2.
\item Bit $v_{binf}$ and value $v_{y'}$, on $\log{n}+\Oh(1)$ bits.
\item Bit $v_{tinf}$ and value $v_{x'}$, like above.
\end{enumerate}

We increased the number of vertices in the graph by a constant factor, so the final length is $3\log{n}+\Oh(1)$ bits.
Decoding can be done in constant time.

\subsection{Disconnected graphs}
Here we describe how to modify distance labeling for connected graphs into distance labeling for general graphs,
by adding at most $\Oh(\log{\log{n}})$ bits to the labels.
This is a standard simple approach, present in some related works.
We sort connected components of the graph by decreasing size, say we have $C_1,C_2,\ldots$,
then proceed with creating distance labeling for each individual connected component.
The final label is the number of connected component of a vertex and then its label for the distance created in the component.
If $v \in C_i$, we encode $i$ on $\log{i}+\Oh(\log{\log{n}})$ bits.
We have $|C_i| \leq n/i$, so the final label size is at most $\log{(n/i)}+\Oh(\log{\log{n}})+\log{i}=\log{n}+\Oh(\log{\log{n}})$, as claimed.

\section{Conclusion}
Improving upon the previous results, we have described a distance labeling scheme for permutation graphs matching existing
lower bound up to an additive second-order $\Oh(\log{\log{n}})$ term.
This also improves constants in distance labeling for circular permutation graphs, as described in~\cite{BazzaroG05}.
Namely, one can construct distance labeling of size $6\log{n}+\Oh(\log{\log{n}})$ for such graphs.
We leave as an open question determining the complexity of distance labeling for circular permutation graphs,
and finding more interesting generalisations.


\bibliographystyle{plain}
\bibliography{biblio}

\end{document}